\renewcommand\footnotetextcopyrightpermission[1]{} 
\newcommand{\etal}{\hbox{\emph{et al.}}\xspace}
\newcommand\clu{{\sc cl}$_u$ }
\newcommand\cln{{\sc cl}$_n$ }
\begin{document}

\title{Doric: Foundations for Statistical Fault Localisation}


\author{David Landsberg}
\affiliation{\institution{University College London}}
\email{d.landsberg@ucl.ac.uk}

\author{Earl T. Barr}
\affiliation{\institution{University College London}}
\email{e.barr@ucl.ac.uk}

\renewcommand{\shortauthors}{D. Landsberg \etal}

\begin{abstract}

To fix a software bug, you must first find it. As software grows in size and
  complexity, finding bugs is becoming harder. To solve this problem, measures
  have been developed to rank lines of code according to their "suspiciousness"
  wrt being faulty.  Engineers can then inspect the code in descending order of
  suspiciousness until a fault is found.  Despite advances, ideal measures ---
  ones which are at once lightweight, effective, and intuitive --- have not yet
  been found. We present \textsc{Doric}, a new formal foundation for statistical
  fault localisation based on classical probability theory.  To demonstrate
  \textsc{Doric}'s versatility, we derive \textsc{cl}, a lightweight measure of
  the likelihood some code caused an error.  \textsc{cl} returns
  probabilities, when spectrum-based heuristics (\textsc{sbh}) usually return difficult
  to interpret scores.  \textsc{cl} handles fundamental fault scenarios that
  spectrum-based measures cannot and can also meaningfully identify causes with
  certainty. We demonstrate its effectiveness in, what is to our knowledge, the
  largest scale experiment in the fault localisation literature.  For Defects4J
  benchmarks, \textsc{cl} permits a developer to find a fault after inspecting 6
  lines of code 41.1.8\% of the time.  Furthermore, \textsc{cl} is more accurate
  at locating faults than all known 127 \textsc{sbh}.  In particular, on Steimann's
  benchmarks one would expect to find a fault by investigating 5.02 methods, as
  opposed to 9.02 with the best performing \textsc{sbh}.

\end{abstract}

%
%
\begin{CCSXML}
<ccs2012>
<concept>
<concept_id>10011007.10011074.10011092.10011691</concept_id>
<concept_desc>Software and its engineering~Error handling and recovery</concept_desc>
<concept_significance>500</concept_significance>
</concept>
</ccs2012>
\end{CCSXML}

\ccsdesc[500]{Software and its engineering~Error handling and recovery}

\keywords{fault localisation, debugging}

\maketitle


\section{Introduction}

Software fault localisation is the problem of quickly identifying the parts of the code that caused an error. Accordingly, 
the development of effective and efficient
methods for fault localisation has the potential to greatly reduce costs,
wasted programmer time, and the possibility of catastrophe~\cite{newscientist}.
In this paper, we focus on methods of lightweight statistical software fault localisation. In general, statistical methods use a given fault localisation measure to assign lines of code a real number, called that line of code's "suspiciousness" degree, as a function some statistics about the program and test suite. In spectrum-based fault localisation, the engineer then inspects the code in descending order of suspiciousness until a fault is found. 
The driving force behind research in spectrum-based fault localisation is the search for an "ideal" measure. 

What is the ideal measure? We assume it should satisfy three properties.  First,
the measure should be effective at finding faults. A measure is effective if an
engineer would find a fault more quickly using the measure than not.  Following
Parnin and Orson, and in the absence of user trials to validate it, we assume
that experiment can estimate a measure's effectiveness by determining how often
a fault is within the top "handful" of most suspicious lines of code under the
measure~\cite{PO11}.  Second, the measure should be lightweight. A measure is
lightweight if an algorithm can compute it
fast enough that an impatient developer does not lose interest.
The current gold standard in speed is spectrum-based,
whose values usually take seconds to compute
and scale to large programs~\cite{7390282}.
Third, an ideal measure should compute meaningful values that
describe more than simply which lines of code are more/less
"suspicious" than others. A canonical meaningful value is
the likelihood, under probability theory, that the given code was faulty.

Debugging is an instance of the scientific method:  developers observe,
hypothesise about causes, experiment by running code, then \emph{crucially}
update their hypotheses.  \textsc{Doric} allows the definition of fault
localisation measures that model this process --- measures that we can update in
light of new data.  In \autoref{section_method}, we present $\textsc{cl}_u$, a
method for updating our \textsc{cl} measure that does just this.

To advance the search for an ideal measure, we propose a ground-up re-foundation of statistical fault localisation based on probability theory.
The contributions of this paper are as follows:

\begin{itemize}

  \item We propose \textsc{Doric}: a new formal foundation for statistical fault
    localisation.  Using \textsc{Doric}, we derive a causal likelihood
    measure and integrate it into a novel localisation method.

\item We provide a new set of fundamental fault scenarios which, we argue, any statistical fault localisation method should analyze correctly. We show our new method does so, but that no {\sc sbh} can. 

\item We demonstrate the effectiveness of \textsc{cl} in, what is to our
  knowledge, the largest-scale fault localisation experiment to date:
    \textsc{cl} is more accurate than all 127 known {\sc sbh}s,
    and when a developer
    investigates only 6 non-faulty lines, no {\sc sbh} outperforms it on Defects4J where the developer would find a fault 41.18\% of the time.

\end{itemize}

All of the tooling and artefacts needed to reproduce our results are available
at \url{utopia.com}.

\section{Preliminaries}\label{Preliminaries}

To reconstruct statistical fault localisation ({\sc sfl}) from the ground up, we must precisely define our terms. {\sc sfl}  conventionally assumes a number of artifacts are available. This includes a program (to perform fault localisation on), a test suite (to test the program on), and some units under test located inside the program (as candidates for a fault)~\cite{Steimann:2013:TVV:2483760.2483767}. 
From these, we define 
coverage matrices, the formal object at the heart of many statistical fault localisation techniques~\cite{7390282}, including our own. 


\textbf{Faulty Programs}.
Following Steimann \etal's terminology~\cite{Steimann:2013:TVV:2483760.2483767}, a \emph{faulty program}
is a program that fails to always satisfy a \emph{specification}, which is a
property expressible in some formal language and describes the intended
behavior of some part of the program.  When a
specification fails to be satisfied for a given execution (i.e., an
\emph{error} occurs), we assume there exists some lines of code in the program that cause the error for that execution, identified as
the \emph{fault} (aka \emph{bug}). 

\begin{example}
An example of a faulty {\sc c} program is given in Fig.~\ref{Code 1}
({\tt minmax.c}), taken from Groce \etal~\cite{Groce04errorexplanation}).
We use it as our running example throughout this paper. 
Some executions of {\tt minmax.c} violate the specification {\tt least
<= most} (i.e., there are some executions where there is an error). Accordingly, in these executions, a corresponding assertion (the last line of the program) is violated. Thus, the program fails to always satisfy the
specification.  The fault in this example is labeled {\tt u3}, which
should be an assignment to {\tt least} instead of {\tt
most}. \label{faultyprogs}
\end{example}

\begin{figure}[t!]
\begin{minipage}[t]{0.5\linewidth}
\noindent

\begin{verbatim}
int main() { 

  int in1, in2, in3; 
  int least = in1; 
  int most = in1; 
  
  if (most < in2)     
    most = input2; // u1    
      
  if (most < in3)     
    most = in3; // u2
    
  if (least > in2)     
    most = in2; // u3 (fault)  
    
  if (least > in3)     
    least = in3; // u4   
 
  assert(least <= most) 
  }   
  
\end{verbatim}\vspace{-1cm}~\caption{{\tt minmax.c.}}\label{Code 1} 
\noindent
    
\end{minipage}\hfill
\begin{minipage}[t]{0.5\linewidth}

\centering

\vspace{0.8cm}

\kbordermatrix{\mbox{}& u_1 & u_2 & u_3 & u4 & e\\
t_1 & 0 & 1 & 1 & 0 & 1\\
t_2 & 0 & 0 & 1 & 1 & 1\\
t_3 & 0 & 0 & 1 & 0 & 1\\
t_4 & 1 & 0 & 0 & 1 & 0\\
t_5 & 1 & 1 & 0 & 0 & 0\\
}\caption{Coverage Matrix.}\label{Coverage Matrix}

\vspace{1.1cm}

\begin{tabular}{l l l}
 & Vector & Oracle \\
\hline
$t_1$ & $\langle 1, 0, 2 \rangle$ & fail \\
$t_2$ & $\langle 2, 0, 1 \rangle$ & fail \\
$t_3$ & $\langle 2, 0, 2 \rangle$ & fail \\
$t_4$ & $\langle 1, 2, 0 \rangle$ & pass \\
$t_5$ & $\langle 0, 1, 2 \rangle$ & pass
\end{tabular}\caption{Test Suite.}\label{Test Suite}

\end{minipage} 
\vspace{-0.3cm}
\end{figure}

\medskip


\textbf{Test Suites}. 
Each program has a set of test cases called a test suite $T$. Following Steimann \etal~\cite{Steimann:2013:TVV:2483760.2483767}, a \textit{test case} 
is a repeatable execution of some part of a program. We assume each test case is associated with an input vector to the program and some oracle which describes whether the test case fails or passes. A test case \emph{fails} if, by executing the program on the test case's input vector, the resulting execution violates a given specification, and \emph{passes}  otherwise.

\begin{example}
The test case associated with input vector $\langle 0, 1, 2 \rangle$ is an execution in
which {\tt in1} is assigned 0, {\tt in2} is assigned 1, and {\tt
in3} is assigned 2, the {\sc uuts} labelled {\tt u1}, {\tt u2}  are executed, but the {\sc uuts} labelled {\tt u3} and {\tt u4}, are not executed. As the specification {\tt least <= most} is satisfied in that execution (i.e. there is no violation to the assertion statement), an error does not occur.  For
the running example we assume a test suite exists consisting of five test cases $T = \{t_1, \dots  ,t_{5}\}$. Each test case is associated with an input vector and oracle as described in Figure~\ref{Test Suite}. For our example of {\tt minmax.c}, the oracle is an error report which tells the engineer in a commandline message whether the assertion has been violated or not. 

\label{testcase}
\end{example}

\textbf{Units Under Test}. 
A \emph{unit under test} ({\sc uut}) is a concrete artifact in a given program. Intuitively, a {\sc uut} can be thought of as a candidate for being faulty. The collection of {\sc uut}s is chosen by software engineer, according to their requirements.  Many types of {\sc uut}s have been used in the literature, including
methods~\cite{DBLP:conf/issre/SteimannF12},
blocks~\cite{Abreu:2006:ESC:1193217.1194368,
DiGiuseppe:2011:IMF:2001420.2001446}, branches~\cite{5070508}, and
statements~\cite{Jones:2002:VTI:581339.581397, journals/jss/WongQ06,
Liblit:2005:SSB:1064978.1065014}.  A~{\sc uut} is said to be \emph{covered}
by a test case if that test case executes the {\sc uut}.   
Notationally, we define a set of \textit{units} as $U$. For notational
convenience in the definition of coverage matrics, $U$ also contains a special
unit $e$, called the \textit{error}, that a test case covers if it fails. We let
$U^* = U - \{e\}$ and $U_{|U|} = e$.

\begin{example}
In Figure~\ref{Code 1},
the {\sc uut}s are the statements labeled in comments
marked {\tt u1}, $\dots$, {\tt u4}. Accordingly, the set of units is  $U = \{u_1, u_2, u_3, u_4, e\}$. \label{uuts}
\end{example}

\textbf{Coverage Matrices}. A useful way to represent the coverage details of a test suite is in the form of a coverage matrix. It will first help to introduce some notation. For a  matrix $c$, we let $c_{i,k}$ be the value of the $i$th column and $k$th row of $c$.

\begin{definition}~\label{def_coverage_matrices}
A \textit{coverage matrix} is a Boolean matrix $c$ of height $|T|$ and width $|U|$, where for each $u_i \in U$ and $t_k \in T$: 

\begin{equation*}
c_{i,k}={\begin{cases}
\:\:1 \:\:\:\:\: \textrm{if} \  t_k  \ \textrm{covers} \ u_i\\
\:\:0 \:\:\:\:\: \textrm{otherwise} \end{cases}}
\end{equation*}

\end{definition}

We abbreviate $c_{|U|,k}$ with $e_k$. 
Intuitively, for all $u_i \in U^*$, $c_{i,k} = 0$ just in case $t_k$ executed $u_i$, and 0 otherwise. $e_{k} = 1$ just in case $t_k$ fails, and 0 otherwise. 
We use the notational abbreviations of Souza etal~\cite{DBLP:journals/corr/SouzaCK16}.
$\sum_k$ is $\sum_{k = 1}^{|T|}$.
$\sum_{k} c_{i,k}e_k $ is $c_{ef}^i$. Intuitively, this is the number of test cases that \underline{e}xecute $u_i$ and \underline{f}ail.
$ \sum_{k} \overline{c}_{i,k} {e}_k $ is $c_{ef}^i$. Intuitively, this is the number of test cases that do \underline{n}ot execute $u_i$ but \underline{f}ail.
$ \sum_{k} {c}_{i,k} \overline{e}_k $ is $c_{ep}^i$. Intuitively, this is the number of test cases that \underline{e}xecute $u_i$ and \underline{p}ass.
$\sum_k \overline{c}_{i,k} \overline{e}_k$ is $c_{np}^i$. Intuitively, this is
the number of test cases that do \underline{n}ot execute $u_i$ but
\underline{p}ass. When the context is clear, we drop the leading $c$ and the index $i$. 
A coverage matrix for the running example is given in Fig.~\ref{Coverage Matrix}.


\begin{table}[t]
\begin{tabular}{lc}
Name & Expression  \\
\hline
\\

 Ochiai &  $\frac{ef^2}{(ef + nf)(ef + ep)}$ \vspace{0.2cm} \\
 
 D3 & $\frac{ef^3}{ep + nf}$ \vspace{0.2cm} \\
 
 Zoltar & $\frac{ef}{ef + nf + ep + \frac{10000nfep}{ef}}$ \vspace{0.2cm}  \\
 
 GP05 & $\frac{(ef + np)\sqrt{ef}}{(ef + ep)(npnf + \sqrt{ep})(ep + np ) \sqrt{| ep - np |}}$ \vspace{0.2cm} \\ 
 
  
  
    Naish & $ef - \frac{ep}{ep+np+1}$ 
  
\end{tabular}
\vspace{0.3cm}
\caption{Some Suspiciousness Functions}\label{esf} 
\vspace{-0.5cm}

\end{table}

\textbf{Spectrum-Based Heuristics}. One way to measure the suspiciousness of a given unit wrt how faulty it is is to use a spectrum-based heuristic, sometimes called a spectrum-based "suspiciousness" measure~\cite{DBLP:journals/corr/SouzaCK16}.

\begin{definition}~\label{def_sbh}
 A \textit{spectrum-based heuristic} ({\sc sbh}) is a function $s$ with signature $s : U^* \rightarrow \mathbb{R}$.  For each $u \in U^*$ $s(u_i)$ is called $u_i$'s degree of \textit{suspiciousness}, and is defined as a function of $u_i$'s \textit{spectrum}, which is the vector  $\langle c_{ef}^i,c_{nf}^i,c_{ep}^i,c_{np}^i \rangle$.
\end{definition}

 The intuition behind {\sc sbh}'s is that $s(u_i) > s(u_j)$ just in case $u_i$ is more "suspicious" wrt being faulty than $u_j$. 
In \textit{spectrum-based fault localisation} ({\sc sbfl}), {\sc uuts} are inspected by the engineer in descending order of suspiciousness until a fault is found. When two units are equally suspicious some tie-breaking method is assumed. One method is choosing the unit which appears earlier in the code to inspect first. We shall assume this method in this paper. 

We discuss a property of some {\sc sbh}'s.
If a suspiciousness function $s$ is \textit{single fault optimal} then for all $u_i, u_j \in U$, if $c_{ef}^i = c_{ef}^i + c_{nf}^i$ and $c_{ef}^i > c_{ef}^j$, then $s(u_i) > s(u_j)$.
Intuitively, this states if a measure is single-fault optimal then {\sc uut}s executed by all failing traces are more suspicious than ones that aren't~\cite{Naish:2011:MSS:2000791.2000795, Landsberg}. This property is based on the observation that the fault will be executed by all failing test cases in a program with only one fault. An example of a single fault optimal measure is the Naish measure (see Table~\ref{esf}). 


\begin{example}
To illustrate how an {\sc sbh} can be used in {\sc sbfl}, we perform {\sc sbfl} with Wong-II measure $s(u_i)$ = $c_{ef}^i - c_{ep}^i $ = $\sum_k c_{i,k}e_k - \sum_k c_{i,k}\overline{e}_k$~\cite{Wong:2007:EFL:1299135.1299726} on the running example. $s(u_1)$ = -2, $s(u_2)$ = 0, $s(u_3)$ = 3, and $s(u_4)$ = 2. Thus the most suspicious {\sc uut} ($u_3$) is successfully identified with the fault. Accordingly, in a practical instance of {\sc sbfl} the fault will be investigated first by the engineer. 
\end{example}

\section{\textsc{Doric}: New Foundations}\label{section_classical_foundations}

We present \textsc{Doric}\footnote{Given our goal of
providing a simple foundation to statistical fault localisation, we name our
framework after this simple type of Greek column}, our formal framework based on
probability theory. We proceed
in four steps. First, we define a set of models to represent the universe of
possibilities. Each model represents a possible way the error could have been
caused (Section~\ref{section_models}). Second, we define a syntax to express
hypotheses, such as "the $i$th uut was a cause of the error" and a semantics
that maps a hypothesis to the set of models where it is true
(Section~\ref{section_syntax}).  Third, we outline a general theory of
probability (Section~\ref{section_probability}). Then, we develop a classical
interpretation of probability (Section~\ref{section_classical_interpretation}).
Using this interpretation, we define a measure usable for fault localisation
(Section~\ref{section_ecp}). Finally, we present our fault localisation
methods~\ref{section_method}.

\subsection{The Models of Doric}~\label{section_models}

\begin{figure}[t!]
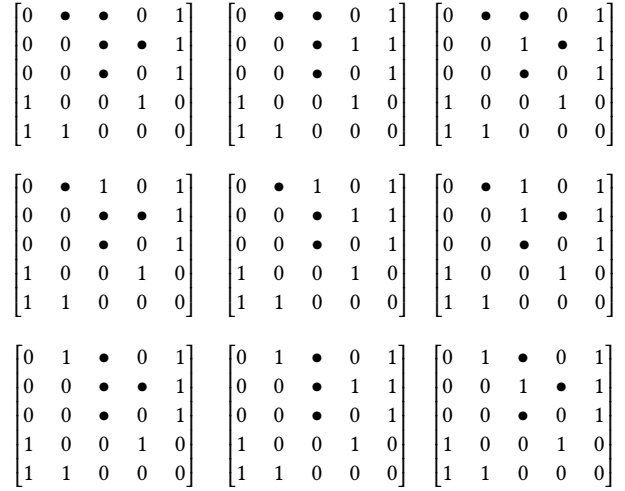



\[\begin{bmatrix}
0 & \bullet   & \bullet  & {\setlength{\fboxsep}{0pt}\colorbox{white}{0}} & 1 \\ 
0 & 0 & \bullet  & \bullet & 1 \\
0 & 0 & \bullet & 0 & 1 \\
1 & 0 & 0 & 1 & 0 \\
1 & 1 & 0 & 0 & 0 
\end{bmatrix} \
\hspace{0.25cm}
\begin{bmatrix}
{\setlength{\fboxsep}{0pt}\colorbox{white}{0}} & \bullet  & \bullet  & {\setlength{\fboxsep}{0pt}\colorbox{white}{0}} & 1 \\ 
0 & 0 & \bullet  & 1 & 1 \\
0 & 0 & \bullet  & 0 & 1 \\
1 & 0 & 0 & 1 & 0 \\
1 & 1 & 0 & 0 & 0 
\end{bmatrix}
\hspace{0.25cm}
\begin{bmatrix}
{\setlength{\fboxsep}{0pt}\colorbox{white}{0}} & \bullet  & \bullet  & {\setlength{\fboxsep}{0pt}\colorbox{white}{0}} & 1 \\ 
0 & 0 & 1 & \bullet  & 1 \\
0 & 0 & \bullet  & 0 & 1 \\
1 & 0 & 0 & 1 & 0 \\
1 & 1 & 0 & 0 & 0 
\end{bmatrix}
\hspace{0.25cm}
\]


\[\begin{bmatrix}
{\setlength{\fboxsep}{0pt}\colorbox{white}{0}} & \bullet  & {\setlength{\fboxsep}{0pt}\colorbox{white}{1}} & {\setlength{\fboxsep}{0pt}\colorbox{white}{0}} & 1 \\ 
0 & 0 & \bullet   & \bullet  & 1 \\
0 & 0 & \bullet  & 0 & 1 \\
1 & 0 & 0 & 1 & 0 \\
1 & 1 & 0 & 0 & 0 
\end{bmatrix} \
\hspace{0.25cm}
\begin{bmatrix}
{\setlength{\fboxsep}{0pt}\colorbox{white}{0}} & \bullet  & {\setlength{\fboxsep}{0pt}\colorbox{white}{1}} & {\setlength{\fboxsep}{0pt}\colorbox{white}{0}} & 1 \\ 
0 & 0 & \bullet   & 1 & 1 \\
0 & 0 & \bullet  & 0 & 1 \\
1 & 0 & 0 & 1 & 0 \\
1 & 1 & 0 & 0 & 0 
\end{bmatrix}
\hspace{0.25cm}
\begin{bmatrix}
{\setlength{\fboxsep}{0pt}\colorbox{white}{0}} & \bullet  & 1 & {\setlength{\fboxsep}{0pt}\colorbox{white}{0}} & 1 \\ 
0 & 0 & 1 & \bullet  & 1 \\
0 & 0 & \bullet & 0 & 1 \\
1 & 0 & 0 & 1 & 0 \\
1 & 1 & 0 & 0 & 0 
\end{bmatrix}
\hspace{0.25cm}
\]


\[\begin{bmatrix}
{\setlength{\fboxsep}{0pt}\colorbox{white}{0}} & {\setlength{\fboxsep}{0pt}\colorbox{white}{1}} & \bullet  & {\setlength{\fboxsep}{0pt}\colorbox{white}{0}} & 1 \\ 
0 & 0 & \bullet    & \bullet   & 1 \\
0 & 0 & \bullet   & 0 & 1 \\
1 & 0 & 0 & 1 & 0 \\
1 & 1 & 0 & 0 & 0 
\end{bmatrix} \
\hspace{0.25cm}
\begin{bmatrix}
{\setlength{\fboxsep}{0pt}\colorbox{white}{0}} & {\setlength{\fboxsep}{0pt}\colorbox{white}{1}} & \bullet   & {\setlength{\fboxsep}{0pt}\colorbox{white}{0}} & 1 \\ 
0 & 0 & \bullet   & 1 & 1 \\
0 & 0 & \bullet   & 0 & 1 \\
1 & 0 & 0 & 1 & 0 \\
1 & 1 & 0 & 0 & 0 
\end{bmatrix}
\hspace{0.25cm}
\begin{bmatrix}
{\setlength{\fboxsep}{0pt}\colorbox{white}{0}} & {\setlength{\fboxsep}{0pt}\colorbox{white}{1}} & \bullet   &
{\setlength{\fboxsep}{0pt}\colorbox{white}{0}} & 
 1 \\ 
0 & 0 & 1 & \bullet    & 1 \\
0 & 0 & \bullet   & 0 & 1 \\
1 & 0 & 0 & 1 & 0 \\
1 & 1 & 0 & 0 & 0 
\end{bmatrix}
\hspace{0.25cm}
\]

~\caption{Causal Models.}~\label{Causal Models}
\vspace{-0.5cm}
\end{figure}

In our framework, classical probabilities are defined in terms of the proportion of models in which a given formula is true. To achieve this, we first define a set of models for our system. We first describe some notation used in the forthcoming definition of models here. Let $j \in \mathbb{N}$ and $m^j$ be a matrix, then $m^j_{i,k}$ is the value of the cell located at the $i$th column and $k$th row of matrix $m^j$, where this value is in $\{1, 0,\bullet \}$. As with coverage matrices, the rows represent test cases and the columns represent units. Informally, for each cell $m^j_{i,k}$, 0 denotes $u_i$ was neither executed by $t_k$ nor a cause of the error $e$, 1 denotes $u_i$ was executed but was not a cause of $e$, and $\bullet$ denotes $u_i$ was executed by $t_k$ and was a cause of $e$. 

\begin{definition}\label{def_causal_models}
Let T be a test suite and U a set of units. The set of \textit{models} for a coverage matrix $c$ is a set of matrices $\mathcal{M}$ = $\{m^1,\dots,$ $m^{|\mathcal{M}|}\}$ of height $|T|$ and width $|U|$ satisfying:

\begin{equation*}
m^j_{i,k} \in {\begin{cases}
\:\: \{1,\bullet\} \:\:\:\:\: \textrm{if} \ c_{i,k}e_k = 1 \: \& \: i \neq |U| \\
\:\: \{c_{i,k}\}  \:\:\:\:\ \textrm{otherwise} \\ \end{cases}}
\end{equation*}

where for each $t_k \in T$ there is some $u_i \in U$ such that $m^j_{i,k}$ = $\bullet$.

\end{definition}

Informally, each model (also called a \textit{causal model}) describes a possible scenario in which errors were caused. The scenario is epistemically possible --- logically possible and (we have assumed) consistent with what the engineer knows. 
Underlying our definition are three assumptions about the nature of causation. First, causation is factive: if a unit causes an error in a given test case, then the {\sc uut} has to be executed and the error both have to factually obtain for a causal relation to hold between them. Second, errors are caused: if an error occurs in a given test case, then the execution of some {\sc uut} caused it. Third, causation is irreflexive: no error causes itself.

\begin{example}
The set of causal models $\mathcal{M}$ of the running example is given in Fig.~\ref{Causal Models}. Following Def.~\ref{def_causal_models}, there are 9 models $\{m^1, \dots, m^9 \}$. $\mathcal{M}$ represents all the different combinations of ways {\sc uuts} can be said to be a cause of the error in each test case. In Fig.~\ref{Causal Models}, we associate $m^1$, $m^2$, $m^3$ with the top three models, $m^4$, $m^5$, $m^6$ with the middle three models, and $m^7$, $m^8$, $m^9$ with the bottom three models. 


\end{example}

\subsection{The Syntax and Semantics of Doric}~\label{section_syntax}

What sort of hypotheses does the engineer want to estimate the likelihood of? In
this section, we present a language fundamental to the fault localisation task.
This language includes hypotheses about which line of code was faulty, which
caused the error in which test case, etc. We develop such a language as follows.
First, we define a set of basic \textit{partial causal hypotheses} $H = \{h_1,
\dots, h_{|U|}\}$, where $h_i$ has the reading "the $i$th {\sc uut} was
\textit{a} cause of the error". Second, we define set of \textit{basic
propositions}  $U = \{u_i,\dots,u_{|U|}\}$, where $u_i$ here takes a
propositional reading "the $i$th {\sc uut} was executed". 

\begin{definition}\label{def_language}
$L$ is called the \textit{language}, defined inductively over a given set of basic propositions $U$ and causal hypotheses $H$ as follows:
\begin{enumerate}
\item if $\phi \in U \cup H$, then $\phi \in L$
\item if $\phi, \psi \in L$, then $\phi \wedge \psi, \neg \psi, \Diamond_k \phi \in L$, for each $t_k \in T$
\end{enumerate}
\end{definition}


We use the following abbreviations and readings .
$\phi \vee \psi $ abbreviates $\neg(\neg\phi \wedge \neg\psi)$, read "$\phi$ or $\psi$".
$\phi \wedge \psi$ is read "$\phi$ and $\psi$". 
$\phi \vee \psi$ is read "$\phi$ or $\psi$".
$\neg \phi$ is read "it is not the case that $\phi$". 
$\Diamond_k \phi$ is read "$\phi$ in the kth test case".
$e$ is read "the error occurred".  In addition, we define $L^*$, called the \textit{basic language}, which is a subset of $L$ defined as follows: If $u_i \in U$ then $u_i \in L*$, if $\phi, \psi \in L$, then $\phi \wedge \psi, \neg \phi \in L$. 

An important feature of $L$ is that we abbreviate two additional types of hypotheses, as follows: First, 
$H_i$ is $h_i \bigwedge_{h_j \in H - \{h_i\}} \neg h_j$, where $H_i$ is read "the $i$th {\sc uut} was \textit{the} cause of the error", is called a \textit{total causal hypothesis} for the error, and intuitively abbreviates the property that the $i$th unit was a cause of the error, and nothing else was. 
Second, we let
$f_i$ is abbreviated $\bigvee_{t_k \in T} \Diamond_k h_i$ , where $f_i$ is read "the $i$th {\sc uut} is a fault", is called a \textit{fault hypothesis}, and intuitively abbreviates the property that the $i$th unit was a cause of the error in some test case.

We now treat the semantics of Doric. 
To determine which propositions in the language $L$ are true in which models $M$, we provide valuation functions mapping propositions to models, as follows:

\begin{definition}~\label{def_semantics}
Let $M$ be a set of models and let $L$ be the language. Then the set of \textit{valuations} is a set $V = \{v_1,\dots, v_{|V|}  \}$, where for each $t_k \in T$ there is some $v_k \in V$ with signature $v_k : L \rightarrow 2^M$, defined inductively as follows:
\begin{enumerate}
\item $v_k(u_i) = \{m^j \in M| m_{i,k}^j \in \{1, \bullet \}\}$, for $u_i \in U$
\item $v_k(h_i) = \{m^j \in M| m_{i,k}^j = \bullet \}$, for $h_i \in H$  
\item $v_k(\phi \wedge \psi) = v_k(\phi) \cap v_k(\psi)$, for $\phi, \psi \in L$
\item $v_k(\neg \phi)$ = $M - v_k(\phi)$, for $\phi \in L$
\item $v_k(\Diamond_n \phi)$ = $v_n(\phi)$, for $\phi \in L$
\end{enumerate}
\end{definition}

$v_k(\phi)$ is read "the models where $\phi$ in the $k$th test case". 
We give an example to illustrate.

\begin{example}
We continue with the running example. Each of the following can be visually verified by checking the causal models in Figure~\ref{Causal Models}. $v_1(u_1)$ = $\emptyset$. Intuitively, this is because $t_1$ executes $u_1$ in no models. $v_1(u_2) = M$. Intuitively, this is because $t_1$ executes $u_1$ in all models. $v_1(h_2) = \{m^1, \dots, m^6\} = M - \{m^7, m^8, m^9\}$. Accordingly, $u_1$ was a cause of the error in 6 out of 9 models. In contrast, $v_1(H_2) = v_1(h_2 \wedge \neg h_1 \wedge \neg h_3 \wedge \neg h_4) = \{m^4, m^5, m^6\}$. Accordingly, $u_2$ was \textit{the} cause of the error in 3 out of 9 models. Finally, $v_1(\diamond_3 h_3) = v_3(h_3) = M$. Accordingly, in the third test case the 3rd {\sc uut} was a cause of the error. 
\end{example}

\subsection{The Probability Theory of Doric}~\label{section_probability}

We want to determine the probability of a given hypothesis. 
We do this by presenting our theory of probability. The theory is based around the following assumptions:
We assume the engineer does not always know which hypotheses are true of each test case. Accordingly, we want our probabilities about hypotheses to take an \textit{epistemic} interpretation, in which the probabilities describe how much a given hypothesis should be believed.

\begin{definition}\label{def_probability}
Let $M, L, V$ be non-empty sets of models, a language and set of valuations respectively. Then, a \textit{probability theory} is a tuple $\langle \textbf{P}, P, w \rangle$, where 


\begin{enumerate}
\item $w: 2^M \rightarrow \mathbb{R}$
\item $\textbf{P}$ =  $\{P_1, \dots, P_{|T|}\}$ is a set of \textit{probability functions}, where for each  $v_k \in V$ there is some $P_k \in \textbf{P}$ such that 

\begin{equation}
P_{k}(\phi) = \frac{w(v_k(\phi))}{w(M)}, {\ \textrm{for} \ \textrm{each} \ } \phi \in L
\end{equation}

\item $P$ is the \textit{expected likelihood} function defined as follows:

\begin{equation}
P(\phi) = \sum_{k} \frac{P_{k}(\phi)}{|T|}, {\ \textrm{for} \ \textrm{each} \ } \phi \in L
\end{equation}
\end{enumerate}

\end{definition}

The \textit{weight} function is $w$ and describes the relative likelihood of a set of models.
$P_k(\phi)$ is the probability that $\phi$ holds in the $k$th test case, and is defined as the proportion of models in which $\phi$ holds. 
$P(\phi)$ is the expected likelihood that $\phi$, and is defined as the average probability that $\phi$ holds in a test case. 
We use the following readings:
$w(X)$ is "the relative likelihood of the models in X".
$P_k(\phi)$ is "the probability that $\phi$ in the $k$th test case".
$P(\phi)$ is "the expected likelihood that $\phi$". 
We use the standard abbreviation of $P(\phi|\psi)$ for $P(\phi \wedge \psi)/P(\psi)$, which reads "the probability that $\phi$ when $\psi$". 

We now discuss immutable assumptions on $w$.  In order to ensure $P_k$ satisfies standard measure theoretic properties, we assume  $w(\emptyset)$ = 0, $w(M)$ > 0, and $w(X \cup Y) = w(X) + w(Y)$ when $X \cap Y = \emptyset$. We allow any extension to the definition of $w$ satisfying the above properties. When $w$ is so defined, we say it provides an interpretation of the probability functions. For instance, one option is to formally define the relative likelihood of models in terms of the number of faults in them. 

Finally, we establish the intuitive result that the likelihood of a given formula in the basic language is simply the proportion of test cases in which it is true. 
Let $f$ be a function which intuitively measures the frequency in which a proposition is true in a test suite. Defined: $f(\phi) = \sum f_k(\phi)$, where $f_k(\phi) = 1$ if $v_k(\phi) = M$ and 0 otherwise. 
We then have the following result:

\begin{proposition}\label{prop_basic_lang}
For all $\phi \in L^*$, $P(\phi)$ = $\frac{f(\phi)}{|T|}$
\end{proposition}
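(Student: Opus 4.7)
The plan is to proceed by structural induction on $\phi \in L^*$, showing the stronger intermediate claim that $v_k(\phi) \in \{M, \emptyset\}$ for every test case $t_k$. Once this is established, the result drops out immediately: by the measure-theoretic assumptions on $w$ (namely $w(\emptyset) = 0$ and $w(M) > 0$), we get $P_k(\phi) = w(v_k(\phi))/w(M) \in \{0,1\}$, and in fact $P_k(\phi) = 1$ precisely when $v_k(\phi) = M$, i.e.\ precisely when $f_k(\phi) = 1$. Averaging over $k$ then yields $P(\phi) = \sum_k P_k(\phi)/|T| = \sum_k f_k(\phi)/|T| = f(\phi)/|T|$.

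For the base case, I would appeal directly to Definition~\ref{def_causal_models}. For a basic proposition $u_i \in U$, the set $v_k(u_i) = \{m^j : m^j_{i,k} \in \{1,\bullet\}\}$. Inspecting the two clauses that define admissible cell values: if $c_{i,k} = 1$, then either $c_{i,k}e_k = 1$ (so $m^j_{i,k} \in \{1,\bullet\}$ is allowed for every $j$) or $e_k = 0$ (so $m^j_{i,k} = c_{i,k} = 1$ for every $j$); in either case every model is included and $v_k(u_i) = M$. If $c_{i,k} = 0$, then $c_{i,k}e_k = 0$ forces $m^j_{i,k} = 0$ for every $j$, so $v_k(u_i) = \emptyset$. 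The edge case $i = |U|$ (the error column) falls under the second clause by definition and behaves identically. Thus $v_k(u_i) \in \{M,\emptyset\}$.

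The inductive step is straightforward because $L^*$ is closed only under $\wedge$ and $\neg$ (no $\Diamond_n$, no $h_i$). By Definition~\ref{def_semantics}, $v_k(\phi \wedge \psi) = v_k(\phi) \cap v_k(\psi)$ and $v_k(\neg \phi) = M - v_k(\phi)$. A trivial case analysis confirms that the family $\{M, \emptyset\}$ is closed under these two operations, so the inductive hypothesis propagates.

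I do not foresee a serious obstacle; the main subtlety is simply being careful in the base case to cover both possible reasons a cell may be forced (the constraint $c_{i,k}e_k = 1$ versus the ``otherwise'' branch fixing $m^j_{i,k} = c_{i,k}$), and to note the special treatment of the error column $i = |U|$. The key conceptual point, worth stating explicitly after the proof, is that formulas of $L^*$ never mention the causal hypotheses $h_i$ or the modal operator $\Diamond_n$, which are the sole sources of genuine variation across models; consequently such formulas behave as classical two-valued statements about the coverage matrix, and their probabilities collapse to relative frequencies over the test suite independently of how $w$ is extended.
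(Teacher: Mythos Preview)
Your proposal is correct and follows essentially the same approach as the paper: both reduce to showing $P_k(\phi) = f_k(\phi)$ via the key fact that $v_k(\phi) \in \{M,\emptyset\}$ for $\phi \in L^*$. The paper simply asserts this fact parenthetically (``as $\phi \in L^*$'') without argument, whereas you supply the structural induction that justifies it, so your version is in fact more complete.
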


\begin{proof}
See Appendix.
\end{proof}

Using this result, we can identify many {\sc sbh}'s  with an intuitive probabilistic expression stated within Doric. For example, $P(u_i \wedge e) = c_{ef}^i/|T|$, $P(u_i \wedge \neg e) = c_{ep}^i/|T|$, $P(\neg u_i \wedge \neg e) = c_{np}^i/|T|$, and $P(\neg u_i \wedge  e) = c_{nf}^i/|T|$. Using these four identities alone one can express the 40 {\sc sbh}'s  of Lucia etal.~\cite{LLJTB14}, and the 20 causal and confirmation {\sc sbh}'s of Landsberg at al.~\cite{Landsberg, landsberg2016methods}.

\subsection{Classical Interpretation}~\label{section_classical_interpretation}

What conditions hold on the relative likelihood function? The question here is which causal models are more likely than others. To illustrate our framework, we will impose conditions on the relative likelihood function to give us a classical interpretation of probability. Informally, probability has a classical interpretation if it satisfies the condition that if there are a total of $n$ mutually exclusive possibilities, the probability of one of them being true is 1/$n$~\cite{jaynes03}. The rationale for this is \textit{the principle of indifference} (aka \textit{the principle of insufficient reason}), which states that if there are a total of $n$ mutually exclusive possibilities, and there is not sufficient reason to believe one over the other, then their relative likelihoods are equal. Formally, for all $x, y \in M$, $ w(\{x\}) = w(\{y\})$. This condition is also known to describe a uniform distribution over the set of models. In the remainder of this paper, we assume this condition.  The assumption is sufficient for the following result:

\begin{equation}~\label{eq_classical_prob}
P_{k}(\phi) = \frac{|v_k(\phi)|}{|M|}, {\ \textrm{for} \ \textrm{each} \ } \phi \in L
\end{equation}

\begin{proposition}
Equation~\ref{eq_classical_prob} follows given indifference
\end{proposition}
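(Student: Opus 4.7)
The plan is to unpack the definition of $P_k$ in Definition~\ref{def_probability}, namely $P_k(\phi) = w(v_k(\phi))/w(M)$, and show that under the indifference condition both the numerator and denominator reduce to a common constant times a cardinality, so that the constant cancels. Since $M$ is finite (it is a set of $|T| \times |U|$ matrices over a finite alphabet), every subset of $M$ is finite, and the immutable additivity assumption on $w$ can be applied a finite number of times.

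First, I would prove by induction on $|S|$ that for any finite $S \subseteq M$,
\begin{equation*}
w(S) \;=\; \sum_{x \in S} w(\{x\}).
\end{equation*}
The base case $S = \emptyset$ uses $w(\emptyset) = 0$, and the inductive step peels off a single element $x_0$ and applies finite additivity to the disjoint union $S = (S \setminus \{x_0\}) \cup \{x_0\}$. Next, let $c := w(\{x\})$ for any $x \in M$; this is well-defined by the indifference hypothesis $w(\{x\}) = w(\{y\})$ for all $x, y \in M$. Then for every finite $S \subseteq M$ the previous identity collapses to $w(S) = c \cdot |S|$.

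Specialising this to $S = M$ gives $w(M) = c \cdot |M|$, and since $w(M) > 0$ with $|M|$ finite and positive, we conclude $c > 0$ (so in particular the ratios below are well-defined and $c$ cancels). Specialising instead to $S = v_k(\phi) \subseteq M$ gives $w(v_k(\phi)) = c \cdot |v_k(\phi)|$. Substituting into Definition~\ref{def_probability} yields
\begin{equation*}
P_k(\phi) \;=\; \frac{w(v_k(\phi))}{w(M)} \;=\; \frac{c \cdot |v_k(\phi)|}{c \cdot |M|} \;=\; \frac{|v_k(\phi)|}{|M|},
\end{equation*}
which is exactly Equation~\ref{eq_classical_prob}.

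The only subtlety, and the step I expect to require the most care, is justifying the passage from the binary additivity axiom $w(X \cup Y) = w(X) + w(Y)$ (for disjoint $X, Y$) to the finite-sum decomposition of $w(S)$ into singleton weights; this is a routine induction but needs $M$ to be finite, a fact that is implicit in Definition~\ref{def_causal_models} but worth noting explicitly. Everything else is a one-line substitution.
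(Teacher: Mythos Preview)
Your proposal is correct and follows essentially the same approach as the paper: decompose $w$ of a set into a sum of singleton weights via finite additivity, invoke indifference to make all singleton weights equal to a common constant, and cancel. If anything, you are more careful than the paper in explicitly justifying (by induction, using finiteness of $M$) the passage from binary additivity to the singleton decomposition $w(S)=\sum_{x\in S}w(\{x\})$, a step the paper simply asserts.
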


\begin{proof}
See Appendix.
\end{proof}

Intuitively, the probability of a proposition is the ratio of models in which it is true. 
Equation~\ref{eq_classical_prob} is tantamount to assuming that, \textit{ab initio}, the engineer knows next to nothing about what caused the error in each test case (each causal model is equally likely).
In practice, we think is probably wrong for the purposes of software fault localisation (causal models with a small number of faults are probably more likely). The main reason for its assumption is that it keeps our forthcoming fault localisation methods simple and tractable. 

We now illustrate how we can use the classical interpretation to give us a definition of $P(f_i)$.  Following our readings, $P(f_i)$ measures the likelihood the ith unit was a fault. This describes is the proportion of models where $i$ is a cause of the error in some test case (Intuitively, $P(f_i)$ is the the proportion of models where there is a "$\bullet$" somewhere in the $i$th column of a model). We call this measure a measure of \textit{fault-likelihood}. Accordingly, the assumption of indifference gives us the following result. Let $i \in [0, |U|]$ and $m, t \in [0, |T|]$ be free variables, and let
$\rho^k = {\sum_{u_i \in U^*}} c_{j,k} $, then:

\begin{equation}~\label{f1}
P(f_i) = P(\bigvee\limits_{k = 1}^{|T|} \Diamond_k h_i) 
\end{equation}

\begin{equation}~\label{f2}
P(\bigvee\limits_{k = m}^{|T|} \Diamond_k h_i) = (1 - P_k(h_i)P(\bigvee\limits_{j = k+1}^{|T|} \Diamond_j h_i))  + P_k(h_i)
\end{equation}

\begin{equation}~\label{f3}
P_t(h_i) = \begin{cases}

\frac{2^{\rho^t-1}}{2^{\rho^t}-1} \:\:\:\:\: \textrm{if} \  c_{i,t}e_{t} = 1 \\

0 \:\:\:\:\:\:\:\:\:\:\:\:\:\:\: \textrm{otherwise}

\end{cases} 
\end{equation}

\begin{proposition}
Equations~\ref{f1},~\ref{f2}, \& ~\ref{f3} follow given indifference.
\end{proposition}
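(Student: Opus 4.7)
The plan is to prove the three equations in sequence, leveraging the factorisation of the model space across test cases that is implicit in Definition~\ref{def_causal_models}.

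Equation~\ref{f1} is the easiest: it is simply the unfolding of the abbreviation $f_i := \bigvee_{t_k \in T} \Diamond_k h_i$, reindexed by $k \in \{1,\dots,|T|\}$. No appeal to indifference is needed.

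For Equation~\ref{f3}, I would use Equation~\ref{eq_classical_prob} to write $P_t(h_i) = |v_t(h_i)|/|M|$ and count the two quantities directly. The key observation is that Definition~\ref{def_causal_models} constrains each column of a model $m^j$ independently: for a passing test case or a column in which $u_i$ is not executed, the entry is forced to equal $c_{i,k}$; for a failing test case $k$ the $\rho^k$ executed units may each be assigned either $1$ or $\bullet$, subject only to the per-row constraint that at least one executed unit is $\bullet$. This yields $2^{\rho^k}-1$ valid configurations for each failing column, so $|M|=\prod_{k:e_k=1}(2^{\rho^k}-1)$. Then if $c_{i,t}e_t=0$ the cell $m^j_{i,t}$ is forced into $\{0,1\}$, hence $v_t(h_i)=\emptyset$ and $P_t(h_i)=0$; while if $c_{i,t}e_t=1$, fixing $m^j_{i,t}=\bullet$ in column $t$ leaves $2^{\rho^t-1}$ choices for that column and does not affect the others, giving $|v_t(h_i)| = 2^{\rho^t-1}\prod_{k\neq t,\,e_k=1}(2^{\rho^k}-1)$. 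Dividing yields the claimed $\frac{2^{\rho^t-1}}{2^{\rho^t}-1}$.

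For Equation~\ref{f2}, I would first observe that, by clause~(5) of Definition~\ref{def_semantics}, $v_j(\Diamond_k h_i) = v_k(h_i)$ for every $j$, so $P(\Diamond_k h_i) = P_k(h_i)$ and the disjunction $\bigvee_{k=m}^{|T|} \Diamond_k h_i$ has the same valuation $\bigcup_{k=m}^{|T|} v_k(h_i)$ in every test case. Using the product structure of $M$ identified above, the event $v_m(h_i)$ depends only on the $m$th column while $\bigcup_{j=m+1}^{|T|} v_j(h_i)$ depends only on columns $m+1,\dots,|T|$; hence the two events are statistically independent under the classical interpretation (the cardinality of their intersection factors as the product of their cardinalities divided by $|M|$). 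Combining this with inclusion--exclusion, $P(A\vee B)=P(A)+P(B)-P(A)P(B)=P(A)+(1-P(A))P(B)$ with $A=\Diamond_m h_i$ and $B=\bigvee_{j=m+1}^{|T|}\Diamond_j h_i$, gives the stated recursion (reading the right-hand side in the natural way $P_m(h_i) + (1-P_m(h_i))\,P(\bigvee_{j=m+1}^{|T|}\Diamond_j h_i)$).

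The main obstacle is the independence step in Equation~\ref{f2}: one must verify that the per-row constraint "at least one $\bullet$ per failing test case" only couples entries within the same row, so that the model space factors cleanly as a product over test cases. Once that factorisation lemma is in place, the rest is routine counting for Equation~\ref{f3} and an application of the standard disjunction identity for Equation~\ref{f2}.
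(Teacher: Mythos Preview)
Your proposal is correct and follows essentially the same line as the paper: Equation~\ref{f1} by unfolding the abbreviation for $f_i$, Equation~\ref{f3} by the same counting-via-factorisation argument used for Equation~\ref{proof_eq4}, and Equation~\ref{f2} by inclusion--exclusion together with independence coming from the product structure $M = M_1 \cdot \ldots \cdot M_{|T|}$ (the paper only sketches the two-test-case instance, whereas you state the general recursion). One terminological slip: in the paper's conventions test cases are \emph{rows} and units are \emph{columns}, so where you write ``column $t$'' and ``depends only on the $m$th column'' you should write ``row''; the mathematics is unaffected.
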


\begin{proof}
See Appendix.
\end{proof}

On the running example, the equations can be used to find $P(f_1) =  0$, $P(f_2) = P(f_4) = 2/3$, $P(f_3) = 1$.

\subsection{Measure for Fault Localisation}~\label{section_ecp}

To develop an efficient fault localisation method based on our framework, we need to do two things. First, we need to identify a probabilistic expression which tells us which unit should be investigated first when looking for faults. Second, we need to identify an efficient way to compute this. In this section, we address these issues in turn.

Which unit should be investigated first when looking for faults?
To answer this, it is tempting to answer --- the unit which is the most likely fault ($\max_{u_i\in U^*}$ $P(f_i)$).  However, under the classical interpretation, this will be ineffective for fault localisation as it ignores passing test cases. 
Moreover,
we do not think that this is necessarily the unit which should be investigated first. Rather, we think that the unit which is estimated to have the highest \textit{propensity} to cause the error should be investigated first. To see the difference, we observe that something might have a high fault-likelihood, but simultaneously have a low propensity to cause errors. Think of a rarely executed bug --- we think these will be of less interest to an engineer. 
Accordingly, the measure we should use will be an expression describing this propensity. 

We make two assumptions in our development. First, we make the assumption that it is possible to find \textit{the} cause, as opposed to \textit{a} cause, where finding \textit{the} cause is preferable.  Second, following Popper~\cite{popper2005logic}, we assume the propensity of some $x$ to $y$ is described by probabilistic expressions of the form $P(y|x)$. Accordingly, the propensity of a unit to cause an error can be analogously described in our framework with $P(H_i|u_i)$. Following our readings of this section, this is read "the likelihood a given {\sc uut} was the cause of an error when it was executed". We call this measure a measure of \textit{causal likelihood} (or {\sc cl} for short), and for some cases is able to identity a some faults with certainty, in the sense that if $P(H_i|u_i) = 1$ then $P(f_i) = 1$. Our answer to our question is thus $\max_{u_i\in U^*}$ $P(H_i|u_i)$.


We now address the question of how to compute {\sc cl} efficiently. One option is to generate all the matrices in the set of causal models, and (following our assumption of indifference) find the probability directly by counting models. However, this is intractable in general. A more tractable alternative is to find an expression for $P(H_i|u_i)$ which is stated purely as a function of a given coverage matrix $c$ and is also tractable. We present this in Eq.~\ref{eq 1} and show it follows from the definitions of this section. As follows:
Let $c$ be a given coverage matrix. Let $\rho_k$ abbreviate $\sum_{u_j \in U^*} c_{j,k}$. Informally, $\rho_k$  is the number of units executed by $t_k$. Then for each $u_i \in U^*$:

\begin{equation}~\label{eq 1}
P(H_i|u_i) = \frac{\sum_{k}^{} {{c_{i,k}e_k}^{-2^{\rho_k}-1}}}{\sum_{k}^{} c_{i,k}}
\end{equation}

\begin{proposition}
Equation~\ref{eq 1} follows using the definitions.
\end{proposition}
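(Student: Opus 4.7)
The plan is to unfold the conditional $P(H_i\mid u_i)$ into the ratio $P(H_i\wedge u_i)/P(u_i)$, expand each side via the expected-likelihood definition, invoke the classical identity $P_k(\phi)=|v_k(\phi)|/|M|$ from Equation~\ref{eq_classical_prob}, and then carry out two independent combinatorial counts on models: one for $v_k(u_i)$ and one for $v_k(H_i\wedge u_i)$. The key observation is that across failing test cases the causal choices in \textsc{Doric} models are \emph{independent} (the non-empty-cause constraint is a per-row constraint), so the counts factor row-by-row.

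\textbf{Step 1: Reduce to per-test probabilities.} By definition of conditional probability and Definition~\ref{def_probability}, writing out the expected likelihood and cancelling the common $1/|T|$ factor gives
\begin{equation*}
P(H_i\mid u_i)\;=\;\frac{\sum_k P_k(H_i\wedge u_i)}{\sum_k P_k(u_i)}.
\end{equation*}
Under indifference, each $P_k(\phi)$ equals $|v_k(\phi)|/|M|$, so the $|M|$ normalisers also cancel, leaving a pure counting problem.

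\textbf{Step 2: Count the denominator.} For $v_k(u_i)$: by Definition~\ref{def_semantics}(1) and Definition~\ref{def_causal_models}, if $c_{i,k}=1$ then every model has $m^j_{i,k}\in\{1,\bullet\}$ and so $v_k(u_i)=M$, while if $c_{i,k}=0$ every model has $m^j_{i,k}=0$ and so $v_k(u_i)=\emptyset$. Thus $|v_k(u_i)|=c_{i,k}|M|$ and the denominator sum becomes $\sum_k c_{i,k}$.

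\textbf{Step 3: Count $|M|$ and $|v_k(H_i\wedge u_i)|$.} This is the main obstacle, since $H_i$ forbids every other unit from being a cause in the specified test case while the ``at least one $\bullet$'' constraint in Definition~\ref{def_causal_models} couples the units row-wise. The resolution is to observe that for each failing test case $k'$, the cells $m^j_{\cdot,k'}$ are chosen independently of those in other rows, subject only to: cells with $c_{\cdot,k'}=0$ being forced to $0$, cells with $c_{\cdot,k'}=1$ taking values in $\{1,\bullet\}$, and not all chosen values being $1$. Among the $\rho_{k'}$ executed units this yields $2^{\rho_{k'}}-1$ admissible row-assignments, and passing rows contribute a factor $1$, so $|M|=\prod_{k':e_{k'}=1}(2^{\rho_{k'}}-1)$. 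For $v_k(H_i\wedge u_i)$ with $c_{i,k}e_k=1$, row $k$ is forced (only $m^j_{i,k}=\bullet$, all other executed cells equal $1$), contributing a factor $1$ rather than $2^{\rho_k}-1$; all other failing rows remain free. Hence
\begin{equation*}
|v_k(H_i\wedge u_i)|\;=\;\frac{c_{i,k}e_k}{2^{\rho_k}-1}\,|M|,
\end{equation*}
with the case $c_{i,k}e_k=0$ giving $0$ directly (either $u_i$ cannot be a cause in row $k$, or the row is passing so $m^j_{i,k}\neq\bullet$).

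\textbf{Step 4: Assemble.} Substituting the two counts back into the ratio of Step 1 yields
\begin{equation*}
P(H_i\mid u_i)\;=\;\frac{\sum_k c_{i,k}e_k\,(2^{\rho_k}-1)^{-1}}{\sum_k c_{i,k}},
\end{equation*}
which is Equation~\ref{eq 1}. The only genuine subtlety is justifying the row-wise factorisation in Step 3; everything else is bookkeeping from the definitions. I would therefore devote the bulk of the written proof to carefully stating the bijection between causal models and tuples of admissible row-assignments, and only sketch Steps 1, 2 and 4.
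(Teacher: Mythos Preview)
Your proposal is correct and follows essentially the same route as the paper: both reduce $P(H_i\mid u_i)$ to a ratio of per-test counts, invoke the classical interpretation, and then use the row-wise factorisation $M=M_1\cdot M_2\cdots M_{|T|}$ to obtain $P_k(H_i)=1/(2^{\rho_k}-1)$ when $c_{i,k}e_k=1$. The only cosmetic difference is that the paper separates the simplification $P(H_i\wedge u_i)=P(H_i)$ as an explicit auxiliary equation (their Eq.~\ref{proof_eq2}) before counting, whereas you absorb it into the direct count of $v_k(H_i\wedge u_i)$; the combinatorics is identical.
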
~\label{prop_eq1}

\begin{proof}

To aid in the proof, it will be useful to establish the following equations.
Let $t \in {[1,|T|]}$ and $i \in {[1,|U|]}$, then:

\begin{equation}\label{proof_eq1}
P(H_i|u_i) = \frac{P(H_i \wedge u_i)}{P(u_i)}
\end{equation}

\begin{equation}\label{proof_eq2}
P(H_i \wedge u_i) = P(H_i)
\end{equation}

\begin{equation}\label{proof_eq3}
P(H_i) = \sum_{k} \frac{P_k(H_i)}{|T|}
\end{equation}

\begin{equation}\label{proof_eq4}
 P_t(H_i) = \begin{cases}

{1}/{(2^{\rho_t}-1)} \:\:\:\:\:\:\:\:\: \textrm{if} \: c_{i,t}, e_{t} = 1  \\

0 \:\:\:\:\:\:\:\:\:\:\:\:\:\:\:\:\:\:\:\:\:\:\:\:\:\:\ \textrm{otherwise}

\end{cases} 
\end{equation}

\begin{equation}\label{proof_eq5}
P(u_i) = \sum_{k} \frac{c_{i,k}}{|T|}
\end{equation}
 
We now sketch the proof of these equations. \ref{proof_eq1} follows by the definition of conditional probability. \ref{proof_eq3} follows by def.~\ref{def_probability}. It remains to give the proofs for equations~\ref{proof_eq2},~\ref{proof_eq4} and~\ref{proof_eq5}. As these are longer they are consigned to the appendix in proposition~\ref{a_causal}. 

Finally, it is easily observed that Eq.~\ref{eq 1} holds using equations \ref{proof_eq1}-\ref{proof_eq5}. As follows: $P(H_i|u_i) = P(H_i|u_i)/P(u_i)$ (by Eq.\ref{proof_eq1}). Thus, $P(H_i|u_i) = P(H_i)/P(u_i)$ (by Eq.\ref{proof_eq2}). So, $P(H_i|u_i) = \sum_k P_k(H_i)/|T|/P(u_i)$ (by Eq.\ref{proof_eq3}). Thus, $P(H_i|u_i) = \sum_k (P_k(H_i)/|T|)/ (\sum_k c_{i,k}/|T|)$ (by Eq.\ref{proof_eq5}). So, $P(H_i|u_i) = \sum_k P_k(H_i)/ \sum_k c_{i,k}$ (by cancellation). It remains to show that $P_k(H_i) = c_{i,k}e_k / 2^{\rho_k -1}$. Assume  $c_{i,k}, e_k = 1$. Then by the first condition of Eq.~\ref{proof_eq4} $P_k(H_i) = 1 / 2^{\rho_k -1}$. This is equal to $c_{i,k}e_k / 2^{\rho_k -1}$ by our assumption. Assume it is not the case that $c_{i,k}, e_k = 1$. Then by the second condition of Eq.~\ref{proof_eq4} $P_k(H_i) = 0$. Now, as either $c_{i,k}$ or $e_k$ is 0 (by def.~\ref{Coverage Matrix}), $c_{i,k}e_k  = 0$, and thus $c_{i,k}e_k / 2^{\rho_k -1} = 0$. Thus, $P_k(H_i) = c_{i,k}e_k / 2^{\rho_k -1}$. 
\end{proof}

\begin{example}\label{more complicated example}
To illustrate {\sc cl}, we find $P(H_i|u_i)$ for each of $u_1,\dots ,u_4$ for the the running example of {\tt minmax.c}. 
We begin with $P(H_1|u_1)$.
We begin by evaluating the numerator of Eq.~\ref{eq 1}, which is equal to $(0 x 0) / (2^{2}- 1) + (0 \times 0) / (2^{2}- 1) + (0 \times 0) / (2^{1}- 1) +  (1 \times 0)/(2^{2}- 1) + (1 \times 0)/(2^{2}- 1) = 0$. The denominator is equal to 2. Thus $P(H_2|u_2)$ = $0/2 = 0$. 
We now do $P(H_2|u_2)$. 
The numerator is equal to $(1 \times 1)/(2^{2}-1)$ + $(1\times 0)/(2^{2}-1)$ + $(1 \times 0)/(2^{1}-1)$ + $(0 \times 0)/(2^{2}-1)$ + $(0 \times 1)/(2^{2}-1)$ = $(1 \times 1)/(2^{2}-1)$ = $1/3$. 
The denominator is equal to 2. Thus $P(H_2|u_2)$ = $(1/3)/2 = 1/6$. $P(H_3|u_3)$ = (1/3 + 1/3 + 1) / 3 = $0.\overline{5}$, and $P(H_4|u_4)$ = (1/3)/2 = $0.1\overline{6}$. Accordingly, the fault $u_3$ is estimated to have the highest likelihood of causing an error when executed.  
\end{example}

We now discuss time-complexity. It is observed that the time complexity of computing the value of both fault and causal likelihood ($P(f_i)$ and  $P(H_i|u_i)$ respectively) is a (small) constant function of the size of the given coverage matrix. This makes it comparable to spectrum-based heuristics in terms of efficiency, which also has this property. To answer our stated question explicitly, to compute $\max_{u_i \in U^*} P(H_i|u_i)$ efficiently we can simply compute Equation~\ref{eq 1} for each $u_i \in U$ using the given coverage matrix $c$, and return the unit with the highest likelihood. 

Finally, we illustrate how our measures of fault and causal likelihood provide a small armory of meaningful measures useful to the engineer. Returning the running example, the engineer might assume the principle of insufficient reason and say of the third unit "the probability it is a fault is 1" and "will likely cause the error when executed" (given $P(f_3) = 1$ and $P(H_3|u_3) > \frac{1}{2}$ respectively). 
We think these quantities are more meaningful than what is reported by some of most effective established {\sc sbh}'s customized to {\sc sbfl}. For instance "the GP05 measure reports the third unit to have a suspiciousness degree of 0.5576" does not have the same meaning, or suggest actionability to the engineer in the same way.


\subsection{Semi-Automated Methods}\label{section_method} 

We now address the question of how to use Eq~\ref{eq 1} in a fault localisation method. We present two such methods. We then compare our methods to {\sc sbfl}. 

Our first method is similar to the {\sc sbfl} procedure discussed in Section~\ref{Preliminaries}. Here, each $u_i \in U$ is associated with a causal likelihood, as determined by using Eq.~\ref{eq 1}. The engineer then inspects uuts in the program in descending order of causal likelihood (also called suspiciousness) until a fault is found. When two units are equally suspicious, the unit higher up in the code is inspected first. We call this procedure \cln (which abbreviated "\underline{c}ausal \underline{l}ikelihood with \underline{n}o updating").

We now present our second method.
We begin with some motivation. To start the fault localisation process, we assume the engineer will want to investigate $\max_{u_i \in U^*}(P(H_i|u_i))$. However, in the course of further investigation about the program, the engineer will discover new facts, symbolized $\phi$ (for some $\phi \in L$). To find faults, the engineer will then want to find the causal likelihood of different units \textit{given} those facts. Accordingly, the unit the engineer should investigate next should be given by the following formula:

\begin{equation}\label{eq_update}
\max_{u_i \in U^*}(P(H_i|u_i \wedge \phi))
\end{equation}

The above motivates our second method, which is described as follows. First, $\phi$ is set to a tautology, and the value of Eq.~\ref{eq_update} is computed. 
Suppose the unit returned by Eq.~\ref{eq_update} is $u_j$. Then, if $u_j$ is inspectd by the engineer, and if found to be faulty,  the search terminates. If not, we assume $\neg h_j$ and compute the value of Eq.~\ref{eq_update} letting $\phi = \neg h_j$. Suppose the unit returned is $u_k$. The process is similar to before --- if $u_j$ is faulty, the search terminates. If not, we assume $\neg h_k$ and compute the value of Eq.~\ref{eq_update} letting $\phi = \neg h_j \wedge \neg h_k$. The search continues in this way until a fault is found. We call this procedure \clu ("\underline{c}ausal \underline{l}ikelihood with \underline{u}pdating"). Characteristic of \clu is that clues discovered throughout the investigation can be used to give us new probabilities.

We now discuss implementation details of the second method.
In a practical implementation, we can use Equation~\ref{eq 1} in the following way. Let $F \subseteq H$ be the set of indices to causal hypotheses known to be false. Then we can compute Eq.~\ref{eq_update} by redefining $\rho_k$ as $\sum_{u_j \in U} c_{j,k} - \sum_{h_n \in F} c_{n,k}$, and use the expression on the $rhs$ of Eq.~\ref{eq 1} to find the value of the argument of Eq.~\ref{eq_update}. Proof of this is included in an extended version of this paper. Secondly, in a practical implementation, we also allowed ourselves to limit the size of $F$ (called an update bound), which represents the number of updates an engineer is willing to make. If an update bound is set and reached, then \clu procedure continues without any further updates.


We now discuss valuable formal properties pertaining to the second method.
We observe that the process of fault localisation is much like a game of hide and seek --- insofar as when an engineer inspects one location for a fault and it is not there, then our estimations for the likelihood it is elsewhere should increase. We think it is desirable for a fault localisation method to satisfy a similar property. Accordingly, in this section we show that conditioning on causal likelihood (as per the method of \clu) satisfies a similar property, as follows:

\begin{proposition}~\label{hsp}
Let $c$ be a coverage matrix where $c_{i,k} = c_{j,k} = 1$ for some $t_k \in T$. Then $P(H_i|u_i) < P(H_i|u_i \wedge \neg h_j)$.
\end{proposition}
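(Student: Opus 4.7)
The plan is to reduce both conditional probabilities to the shape ``$P(H_i)$ divided by a denominator'' and then show the denominator strictly decreases when we additionally condition on $\neg h_j$. From Definition~\ref{def_semantics}, any model in $v_k(H_i)$ satisfies $m^{\ell}_{i,k} = \bullet$ (so $u_i$ holds at $k$) and, for every $m \ne i$, $m^{\ell}_{m,k} \ne \bullet$; in particular, taking $m = j$ (which is legal since $j \ne i$), $\neg h_j$ holds at $k$. Hence $v_k(H_i \wedge u_i \wedge \neg h_j) = v_k(H_i)$ for every $k$, and averaging over $k$ gives $P(H_i \wedge u_i \wedge \neg h_j) = P(H_i)$. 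Combined with Eq.~\ref{proof_eq2} from the previous proposition, this yields
\[
P(H_i \mid u_i) \;=\; \frac{P(H_i)}{P(u_i)}, \qquad P(H_i \mid u_i \wedge \neg h_j) \;=\; \frac{P(H_i)}{P(u_i \wedge \neg h_j)}.
\]

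The remaining task is to establish $P(u_i \wedge \neg h_j) < P(u_i)$ together with $P(H_i) > 0$. For the denominator comparison I would argue test-case-by-test-case: always $v_k(u_i \wedge \neg h_j) \subseteq v_k(u_i)$, so $P_k(u_i \wedge \neg h_j) \le P_k(u_i)$ for every $k$. To upgrade the averaged inequality to a strict one I pick the test case $t_k$ supplied by the hypothesis (reading $e_k = 1$ implicitly, since otherwise $v_k(\neg h_j) = M$ trivially and the two sides agree). There $c_{i,k} = 1$ gives $P_k(u_i) = 1$; meanwhile a row-level count on the $\{1,\bullet\}$-cells of row $k$ shows that the total number of valid row assignments is $2^{\rho_k} - 1$ while fixing $m^{\ell}_{j,k} = \bullet$ leaves $2^{\rho_k - 1}$ consistent extensions of the row. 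Therefore $P_k(h_j) = 2^{\rho_k - 1}/(2^{\rho_k}-1)$ and $P_k(u_i \wedge \neg h_j) = (2^{\rho_k - 1} - 1)/(2^{\rho_k} - 1) < 1$. Averaging preserves the strict drop at this $k$, so $P(u_i \wedge \neg h_j) < P(u_i)$.

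The same test case witnesses $P(H_i) > 0$: by Eq.~\ref{proof_eq4}, $c_{i,k}e_k = 1$ forces $P_k(H_i) = 1/(2^{\rho_k}-1) > 0$, hence $P(H_i) > 0$. Dividing this positive numerator by the two strictly ordered positive denominators completes the proof. I expect the main obstacle to be the row-level combinatorial count for $P_k(h_j)$ — specifically, verifying that the ``at least one $\bullet$ per failing row'' constraint is automatically preserved once $m^{\ell}_{j,k} = \bullet$ is pinned, and handling the edge case $\rho_k \ge 2$ (which is guaranteed because both $u_i$ and $u_j$ are executed at $t_k$). Everything else is mechanical unwinding of Definitions~\ref{def_causal_models}--\ref{def_semantics} together with the previously-established Eqs.~\ref{proof_eq1}--\ref{proof_eq5}.
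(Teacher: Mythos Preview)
Your proposal is correct and follows essentially the same route as the paper: use that $H_i$ entails both $u_i$ and $\neg h_j$ to equate the numerators, then reduce the strict inequality to $P(u_i \wedge \neg h_j) < P(u_i)$, witnessed at the test case $t_k$ from the hypothesis. You are in fact more careful than the paper on two points: you explicitly verify $P(H_i)>0$ (needed for the division step to preserve strictness), and you flag the implicit assumption $e_k=1$ at the witnessing test case, whereas the paper simply asserts $v_k(h_j)\neq\emptyset$ from $c_{j,k}=1$ without noting that $e_k=1$ is also required by Definition~\ref{def_causal_models}.
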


\begin{proof}
See Appendix.
\end{proof}

In the remainder of this section, we compare our new methods with {\sc sbfl} in light of two new fundamental fault scenarios which, we argue, any statistical fault localisation method should analyze correctly.

\[\kbordermatrix{\mbox{}& u_1 & u_2 & u_3 & e\\
t_1 & 1 & 1 & 0 & 1 \\
t_2 & 0 & 0 & 1 & 1 \\
}\]

Consider the coverage matrix above. We argue that for any {\sc sbh} $s$ to be adequate for fault localisation it should satisfy the property that $s(u_3) > s(u_2)$. Our reasoning follows from the assumption that, when an error occurs, there is some executed {\sc uut} that was a cause of it. Accordingly,  we can be certain that $u_3$ is a fault --- nothing else could have caused the error in the second test case. However, we are \textit{not} certain that $u_2$ is a fault (as far as we know $u_1$ could have caused the error instead in the first test case).
However, it is impossible for any {\sc sbh} to satisfy $s(u_3) > s(u_2)$, because spectrum for $u_3$ and $u_2$ is the same (i.e. $\langle c_{ef}^2,c_{nf}^2,c_{ep}^2,c_{np}^2 \rangle$ = $\langle c_{ef}^3,c_{nf}^3,c_{ep}^3,c_{np}^3 \rangle$ = $\langle 1,1,0,0 \rangle$ --- see the definition of a spectrum in def.~\ref{def_sbh}). Thus, their suspiciousness is the same. Subsequently, {\sc sbh}s cannot handle this fundamental case.
In constrast, the measure of causal likelihood gets the answer right, as $P(H_3|u_3) = 1$ and $P(H_2|u_2) = 1/3$.

\[\kbordermatrix{\mbox{}& u_1 & u_2 & u_3 & u_4 & e \\
t_1 & 0 & 0 & 1 & 1 & 1 \\
t_2 & 1 & 1 & 0 & 0 & 1 \\
}\]

Now consider the coverage matrix above.
Suppose we begin the fault localisation process, without any prior knowledge about which units are faulty or non-faulty. Now, according to spectrum-based functions, each unit is equally suspicious (as the spectrum for each unit is the same = $\langle 1,1,0,0\rangle$). Suppose, following the {\sc sbfl} method, we choose to investigate $u_1$, and discover it not to be faulty. Accordingly, on the assumption that in every failing test case some executed unit is a cause of the error, we can now be certain that $u_2$ is a fault given this new information. However, we cannot be so certain that either $u_3$ or $u_4$ is a fault (as it might be the case it isn't but the other is). Thus upon learning $u_1$ isn't a fault, degrees of suspiciousness should be updated to make $u_2$ more suspiciousness than $u_3$ and $u_4$. However, {\sc sbfl} is inadequate for fault localisation because it has no facility for updates of this sort.
In contrast, as a consequence of Proposition~\ref{hsp} the \clu method gets it right. At step one, for all $u_i \in U$, $P(H_i|u_i) = 1/3$. Suppose the engineer learns $u_1$ is not faulty. Thus, the engineer evaluates all $u_i \in U - \{u_1\}$ next. $P(H_2|u_2 \wedge \neg h_1) = 1$, and remain $P(H_3|u_3 \wedge \neg h_1) = P(H_3|u_3 \wedge \neg h_1) = 1/3$. 

\section{Empirical Evaluation}~\label{section_experiments}

In this section, we compare the performance of our new methods with all known 127 {\sc sbh}s  on large faulty programs. The goal of the experiment is to establish whether \cln and \clu are effective at fault localisation than {\sc sbh}s . 


\subsection{Setup}

\begin{table}[t!]
\centering
\begin{tabular}{lrrrrrrr}
\textbf{Program} &  \textbf{V} & \textbf{LOC} & \textbf{U}  & \textbf{F} &  \textbf{P} & \textbf{B} \\
\hline
Chart               & 24/26      & 50k  & 680         & 3.75    & 190.21 & 1.92 \\
Closure              & 77/133     & 83k  & 3,432    &  2.88         & 3,367.40 & 2.52 \\
Math                & 91/106     & 19k  & 346      & 1.74         & 168.80 & 3.10 \\
Time                & 20/27      & 53k  & 1,204    &  2.20  &  2,542.85 & 4.10       \\
Lang                & 50/65      & 6k   & 96       & 2.16          & 98.26 & 2.62  \\
Mockito             & 27/38      & -    & 574      & 3.78     & 744.96 & 2.59
 \\
\end{tabular}
\vspace{0.2cm}
\caption{Statistics about Defects4j Benchmarks}~\label{D4j}
\vspace{-0.7cm}
\end{table}

\begin{table}[t]
\centering
\begin{tabular}{lrrrrr}
\textbf{Program} & \textbf{V}  &    \textbf{M}   & \textbf{U}   &  \textbf{F} & \textbf{P}  \\
\hline
AC\_Codec\_1.3 &     543      & 265 &   188      &    5.35          &    16.04   \\
AC\_Lang\_3.0 &      599      & 5373 &  1,666    &  4.22    &     44.84        \\
Daikon\_4.6.4 &      352      & 14387 &   157  &  1.66      &   30.06      \\
Draw2d\_3.4.2 &      570      & 3231  & 89     &   6.71  &  60.73           \\
Eventbus\_1.4 &      577      & 859  &   91      &   8.19    &   75.70            \\
Htmlparser\_1.6  &   599      & 3231 &  600     &   41.70   &  379.17         \\
Jaxen\_1.1.5 &       600      & 1689 &  695     &   70.29   &   581.25             \\
Jester\_1.37b  &     411      & 378  &  64       &    5.09   &     22.94                  \\
Jexel\_1.0.0b13   &  537      & 242 &  335       &  23.15    &    261.39                \\
Jparsec\_2.0  &      598      & 1011 &  510     &   13.14     &    293.59              \\
\vspace{-0.2cm} 
\end{tabular}\caption{Statistics about Steimann's Benchmarks.}\label{Steiman}
\vspace{-0.7cm}
\end{table}

We first present the benchmarks used in our experiment, then describe the methods compared in the experiment, the methods we used to evaluate the performance of the different methods. Finally,  we present some research questions for our experiment to answer.

We first describe the benchmarks used in our experiments. We use two sets of Java benchmarks, Defects4j and Steimann's. Each set of benchmarks contains different programs, each program is associated with different initial faulty versions, and each faulty version is associated with a test suite (which includes some failing test cases) and a method for identifying the faulty {\sc uuts}. Statistics for the two sets of benchmarks are presented in Tables~\ref{D4j} and~\ref{Steiman} respectively. The first column gives the name of the program, the second the number of initial faulty versions (V), the third the number of tested units (lines of code ({\sc loc}) for Defects4j, methods (M) for Steimann). The fourth gives the (rounded) average number of units represented in a coverage matrix for the initial faulty versions (U). The number of units represented in the matrix is always smaller than the number of tested units, as columns in the coverage matrix were removed in our experiments if the corresponding unit was not executed by a failing test case (such units are assumed non-faulty). The fifth and sixth columns give the average number of failing (F) and passing (P) test cases represented in a coverage matrix for each initial faulty version. in Table~\ref{D4j}, The last column gives the average number of faulty units represented in each coverage matrix for each initial faulty version (B). In the case of Steimann's each initial version always had one injected fault, so we have not included that column for its corresponding table.  Finally, we could not find a reliable source for the number of lines of code for Mockito. We now discuss particular details about the two sets of benchmarks.

The first set of benchmarks are taken from the Defects4J repository. This is a database consisting of Java program versions with real bugs fixed by developers, and are described in detail by Ren{\'e} \etal~\cite{d4j}. 
As the authors confirm, not all of the versions were usable with software fault localisation methods. A version was unusable if it had a list of faults which did not correspond to an executed line of code (and thus the techniques considered in this paper would not be able to find them). This was because the faults in these versions include omissions of code.  Thus, the proportion of usable versions are reported in the Vs column of Table~\ref{D4j} (for instance 24/26 versions were usable for Chart). 
To generate coverage matrices, we used pre-existent code from the Defects4J repository.\footnote{\url{https://github.com/rjust/defects4j}}

The second set are the Steimann benchmarks. This is a database of large Java programs with  injected faults introduced via mutation testing, and are described in detail by Steimann \etal~\cite{Steimann:2013:TVV:2483760.2483767}.  The benchmarks are described in Table~\ref{D4j}. To generate multiple fault versions, the authors also created one thousand 2,4,8,16, and 32 fault versions, which were created by combining fault injections from the original 1-fault versions. This meant there were a total of 50K+ program versions associated with our second set of benchmarks. 
We used pre-existent code to generate coverage matrices for us, provided to use by the compilers of the benchmarks~\cite{Steimann:2013:TVV:2483760.2483767}.

We now discuss the methods we compare in our experiments. 
We wish to compare {\sc sbfl} methods with the methods developed in this paper. 
We describe these in turn.
We include in our comparison 127 different {\sc sbh}s . These measures are described in Landsberg~\cite{landsberg2016methods}, and is an attempt at an exhaustive list of {\sc sbh}s available in the literature\footnote{Following established conventions on avoiding divisions by zero with the {\sc sbh}s, we added 0.5 to each of the elements of a spectrum~\cite{Naish:2011:MSS:2000791.2000795, Landsberg}}.
As a baseline for our comparison, we also compared the constant measure (which returns a constant value). 
We also compare \clu (with an update bound of 20). Finally, we also compare \cln when used as a substitute {\sc sbfl} measure for the {\sc sbfl} procedure discussed in Section~\ref{Preliminaries}. This allowed us to compare the benefits of updating.  

We now discuss our methods of evaluation. 
The methods we compare all follow the same format insofar as one inspects more suspicious units first, with units which feature higher up in the code inspected first in the case of ties. We wish to evaluate a method in terms of how quickly a fault would be found using this approach. Accordingly, 
First, for each coverage matrix, a method's \textit{accuracy} is defined as the number of non-faulty units investigated using the technique until a fault is found. A method's accuracy for a given set of coverage matrices associated with that benchmark is defined as the average accuracy for that set. For a given set of coverage matrices, the most accurate method is the one with the lowest accuracy score. 

The second method of evaluation is as follows. For a given cut-off of $n$ and a given set of coverage matrices, a method's \textit{n-score} is the percentage of times a fault is found by investigating $n$ non-faulty units using the method. For a set of coverage matrices and given $n$, the method with the best $n$-score is the one with the highest $n$-score.  We provide values of $n$ in the range [0,10].  Following the suggestions of Parnin and Orso~\cite{PO11}, we provide this range as we think 10 units is a realistic upper bound on the number of units an engineer will investigate using a given method. Accuracy and $n$-scores off score are both called a method's \textit{scores} in general. For each of our sets of benchmarks, the \textit{overall} accuracy score was the average accuracy over all coverage matrices in that set of benchmarks. 

We now discuss our research questions. 
Accordingly, 
for each of our benchmarks, our experimental setup is designed to help answer the following research questions:

\begin{enumerate}
\item Which method is the most accurate?

\item Do our new techniques have the best $n$-scores for $n \in [0,10]$?
\end{enumerate}

Finally, all of our results can be reproduced. The Steimann benchmarks can be downloaded from \url{http://www.feu.de/ps/prjs/EzUnit/eval/ISSTA13/}. The Defects4J benchmarks can be downloaded from \url{https://github.com/rjust/defects4j}. All the {\sc sbh} 's compared are available in~\cite{Landsberg, landsberg2016methods}. 

\subsection{Results}

\begin{figure}[t!]
\includegraphics[width=0.5\textwidth]{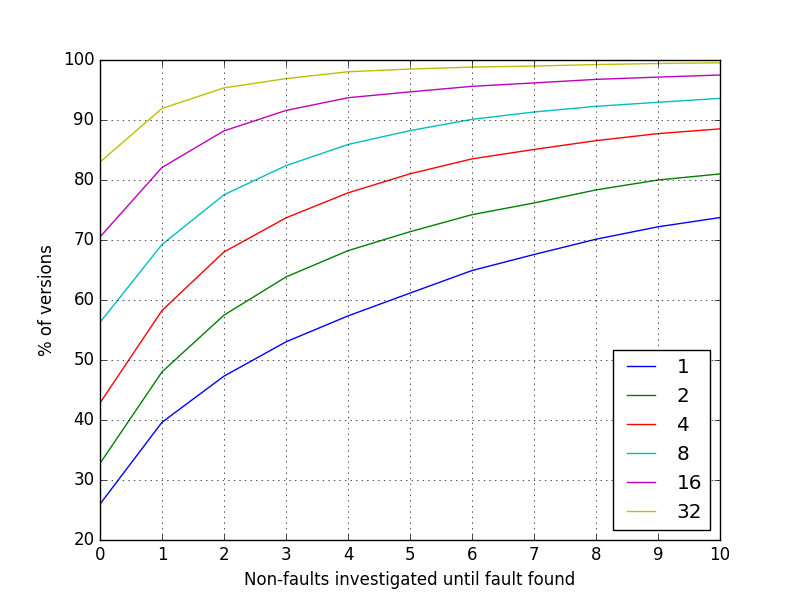}~\caption{Performance of \cln on Steimann's benchmarks with different numbers of faults.}~\label{AB_cactus}
\vspace{-0.8cm}
\end{figure}

\begin{figure}[t!]
\includegraphics[width=0.5\textwidth]{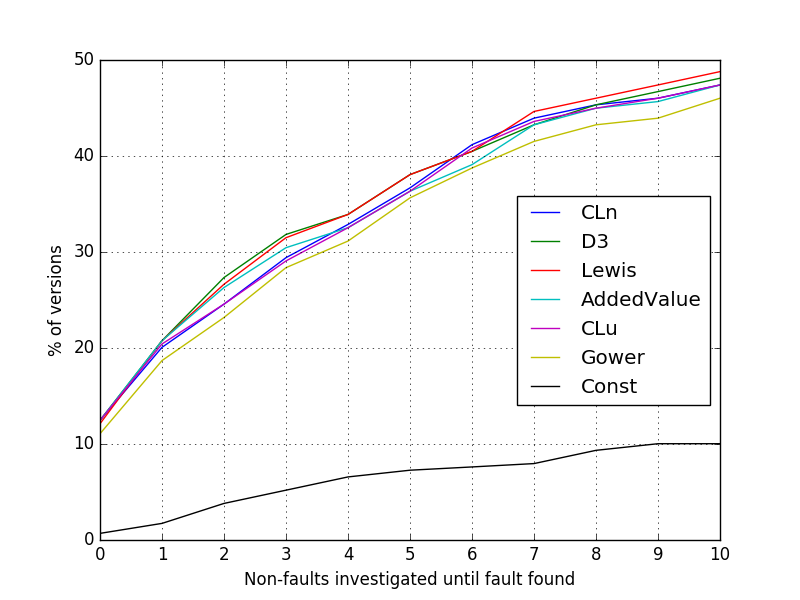}~\caption{ Performance of some methods on Defects4j benchmarks.}~\label{NB_cactus}
\vspace{-0.8cm}
\end{figure}

We first directly answer our two research questions. First,
{Which method is the most accurate?}
For Defects4j \cln has the best accuracy score (213.6).  For  Steimann's  benchmarks, \clu has the best accuracy score (4.9) (\cln came second with 5.02).
Second, {Do our new techniques have the best $n$-scores for any $n \in [0,10]$?}
For Defects4j, \cln had the best 6-score (41.18). For Steimann's benchmarks, for all $n \in [0,10]$, \clu had the best $n$-scores for each of the sets of 4, 8, 16, 32 faults. \cln had the second best scores in these cases.

We first discuss the results for Defects4j, and begin with the overall accuracy scores. To show the range, we report the measures which were ranked 1, 2, 10, 50, 90, 100. These were \cln (213.6), D3 (216.36), Lewis (224.79), AddedValue (232.46), \clu  (241.22) Gower (248.72) respectively. The constant measure was ranked last (643.78).
We now summarise the $n$-scores for each $n \in [0,10]$.
The $n$-scores for a range of methods are presented in Figure~\ref{NB_cactus}. We describe the figure as follows. For each method in the legend, a point was plotted at (x, y) if in y\% of the versions a fault was localized after investigating x non-faulty lines of code. To show the range of performance across methods, the methods in the legend are associated with the aforementioned methods. 
Of all 127 methods compared, \clu did not get the best $n$-score for any of $n \in [0,10]$, and \cln tied with the best 6-score (41.18 - tied with Dennis, Ochiai, and f9830).

We now discuss the results for Steimann's benchmarks. 
We first summarise the overall accuracy scores. To show the range of scores, we report the methods which were ranked 1, 2, 3, 10, 20, 50, 100. These were \clu (4.9), \cln (5.02), Klosgen (9.02), SokalSneath4 (9.51), calf, (9.77), Tarantula (10.54), GP23 (15.68). The constant measure came last (38.69). 
We now summarise accuracy scores on sets of 1, 2, 4, 8, 16, 32 fault programs. On 1-fault programs,  Naish received the best accuracy score (9.81). \clu came 72nd (14.5) and \cln came 74th (14.69). On each of the sets of versions with multiple faults, \clu and \cln came 1st and 2nd respectively, as follows: 
On 2-fault programs, the top 3 were \clu (9.87) \cln and (10.04) and D3  (13.1).
On 4-fault programs, the top 3 were 
\clu (5.14) \cln and (5.3) and GeometricMean (10.34).
On 8-fault programs, the top 3 were 
\clu (2.79) \cln and (2.94) and Klosgen (7.57).
On 16-fault programs, the top 3 were 
\clu (1.16) \cln and (1.24) and AddedValue (6.04).
On 32-fault programs, the top 3 were 
\clu (0.4) \cln and (0.44) and Certainty (4.87).

We now summarise the $n$-scores. For each $n \in [0,10]$, and for the set of 1-fault programs, Naish outperformed \clu or \cln at every value $n$. For each $n \in [0,10]$, and for the set of 2-fault programs, D3  outperformed \clu or \cln at every value of $n$.  For each $n \in [0,10]$, and for each of the sets of 4, 8, 16, 32 fault programs, \clu or \cln outperformed all {\sc sbfl} methods at every value of $n$. \clu and \cln $n$-scores were always similar (+/- 2 percent of on one another), thus to give an indication of how they perform as the number of faults grows in a program, we present Figure~\ref{NB_cactus}. Here, for each of the 1, ..., 32 fault programs, a point was plotted on (x, y) if in y\% of the versions a fault was localized after investigating x non-faulty lines of code.

In the remainder of this section, we discuss our results.
We first discuss differences in value between our evaluative methods (accuracy and $n$-scores). In general, $n$-scores are more important than accuracy from the point of view of an engineer looking for a technique to use in practice. As has been demonstrated in a study by Parnin and Orso, " programmers will stop inspecting statements, and transition to
traditional debugging, if they do not get promising results
within the first few statements they inspect"~\cite{PO11}. $n$-scores are more important from this perspective. 

We now discuss the difference in quality between accuracy and $n$-scores. In general, the accuracy scores for all techniques on Defects4j are poor. For instance, using the most accurate method \cln, one would expect to investigate 213.6 lines of code on average. In contrast, if one limited oneself to investigating fewer than 10 non-faulty lines of code, then one would expect to find a fault almost half the time. For instance, using \cln we can expect to find a fault 41.68\% of the time if one limited oneself to investigating 6 non-faulty lines of code (here, nothing did better than \cln). This suggests that accuracy scores were affected by outliers in which the technique only located a fault after investigating many lines of code (we discuss two such outliers below).  

We now discuss why methods performed differently on Defects4j as opposed to Steimann.
First, in the useable Defects4j versions, the average number of failing test cases way very small, as detailed by Table~\ref{D4j}. In contrast, the number of failing test cases in Steimann benchmarks was much larger. We think this improved the performance of different methods, as more failing tests provide more information about the behavior of the fault. Secondly, in the useable Defects4j versions, 38.4\% had only a single failing test case. Accordingly, \clu ,  \cln and most {\sc sbh}'s performed equivalently on these benchmarks in terms of which units get ranked higher/lower, this meant that the performance of high performing methods tended to converge. 
Thirdly, we emphasize that the {\sc uuts} in Steimann's benchmarks were calls to subclasses, which are often larger than lines of code (the units for Defects4j), and thus the scores look better for Steimann's benchmarks.
Fourthly, for Steimann's benchmarks many of the failing test cases only executed a small part of the overall program. This advantaged our new methods \clu and \clu, which take advantage of short failing executions to increase their causal likelihood. 


We now discuss the difference in accuracy between \clu and \cln. As described earlier, the performance of \clu is slightly worse than \cln on Defects4j, where the opposite is observed on Steimann's benchmarks. We investigated reasons why for this, and after investigating the cases where \cln outperforms \clu in terms of accuracy on Defects4j, we discovered there were two major outliers that made \clu 's overall accuracy score lower (in Chart-5 \clu had to investigate 3177 more lines of code, and Math-6 \clu had to investigate Math-6 1585 more lines of code). For Defects4j, \clu outperformed \cln in only two cases, whereas \cln outperformed \clu in 50. These results (tentatively) suggest the conclusion that \cln is better to use in practice when the program and test suite resemble the benchmarks in Defects4j, and \clu when the program and test suite resemble the benchmarks in Steimann's benchmarks.

Finally, we think the scores of our new techniques demonstrate they are a strong contender to {\sc sbfl} heuristics when integrated into a practical fault localisation approach. 


\subsection{Threats to Validity}

Our threats are informed by the recent work of~\cite{d4j}, who perform a similar test on the Defects4j benchmarks, and by Steimann \etal, who perform similar experiments on the Steimann benchmarks~\cite{Steimann:2013:TVV:2483760.2483767}.

The main threat is wrt how well our results generalize to practical instances of fault localisation. Given the variety of programs, faults, test suites, and development styles in "the wild" it has not yet been shown whether studies of this sort generalize well. Wrt the Steimann benchmarks, there is the additional threat that artificial faults are not good proxies for real
faults. Problems of this sort confront many software fault localisation studies~\cite{Steimann:2013:TVV:2483760.2483767}. To anticipate these problems, we have tried to improve the degree to which our results can generalise by ensuring our experiment was large. To our knowledge our experiment is currently the largest in the literature in terms of three different dimensions: number of methods compared (127+), range of faults studied (1-32), and number of program versions used (50k+). 

\vspace{-1mm}
\section{Related Work}~\label{section_related_work}
\vspace{-1mm}
The recent survey of Wong \etal~\cite{Wong:2016:SSF:3012168.3012182} identifies the most prominent fault localisation methods to be spectrum based~\cite{Abreu:2007:ASF:1308173.1308264, Landsberg,  Naish:2011:MSS:2000791.2000795, LLJTB14, Dstar, EricWong:2010:FCC:1672348.1672568, Yoo12, Kim:2015:NHA:2701126.2701207, Santelices:2009:LFU:1555001.1555021}, slice based~\cite{Agrawal:3075077,Zhang:2005:EEU:1085130.1085135, journals/jss/WongQ06,conf/compsac/LeiMDW12, Weiser:1981:PS:800078.802557}, model based~\cite{Mayer:2008:EMM:1642931.1642950, Mayerold, Yilmaz:2007:AMD:1321631.1321659},
and mutation-based ~\cite{Moon:2014:AMM:2624302.2624536, Papadakis:2015:MMF:2858638.2858646}. For reasons of space, we discuss closely related statistical approaches.

We first discuss {\sc sbh}, which is one of the most lightweight methods. In general, {\sc sbh}'s designed to solve the general problem of fault localisation (for programs with any number of faults) are heuristics which estimate how "suspicious" a given unit is. However, what "suspicious"  means is (to our knowledge) never fully defined. In the absence of an approach which tells us this, research is driven
by the development of new measures with improved experimental performance~\cite{Abreu:2007:ASF:1308173.1308264, Landsberg,  Naish:2011:MSS:2000791.2000795, LLJTB14, Dstar, EricWong:2010:FCC:1672348.1672568, Yoo12, Kim:2015:NHA:2701126.2701207, Santelices:2009:LFU:1555001.1555021}. Many of these top-performing measures are presented in Table~\ref{esf}:
 D3 was developed by raising the numerator of a previously used measure to the power of 3~\cite{6258291}. Zoltar was developed by adding $10000\frac{nfep}{ef}$ to the denominator of a previously used measure~\cite{1596502}. GP05 was found using genetic programming~\cite{Yoo12}. Ochiai was originally designed for Japanese fish classification~\cite{Ochiai, Abreu:2007:ASF:1308173.1308264}. In this paper, we have tried to improve upon the theoretical connection between developed measures and the fault localisation problem.

We now discuss theoretical results for {\sc sbfl}.
 Theoretical results
include proving potentially desirable formal properties of measures and finding
equivalence proofs for classes of measures~\cite{Xie, Landsberg, Naish:2011:MSS:2000791.2000795, Naish:Duals, DBLP:conf/sac/DebroyW11}. Yoo \etal have
established theoretical results that show that a "best" performing suspicious
measure for {\sc sbfl} does not exist~\cite{Nopotofgold}, arguing there is "no pot of gold at the end of the program spectrum rainbow" in theory. With this result, there remains the problem of
providing better formal foundations, and deriving improved measures shown to satisfy more formal properties. Our work in this paper is designed to addresses this.

Two more heavyweight approaches to multiple fault localisation are as follows.  Both of these approaches use fault models in their analysis. Here, a fault model is a set of units with the property that each failing trace executes at least one of them.
The first is the simple classical approach of Steimann \etal~\cite{DBLP:conf/icst/2009}, where the probability of a unit being faulty is the proportion of fault models it is a member of. A second method is Barinel ~\cite{Abreu:2009:SMF:1747491.1747511}, which uses Bayesian analysis and heuristic policies to estimate the health of {\sc uuts}, and uses a tool Staccato~\cite{ Abreu09alow-cost} to generate large sets of fault models.  The main issue confronting these approaches is scalability, given the requirement of generating large sets of fault models. Additionally, in one study the implementation of Barinel was unable to scale to the Steimann benchmarks~\cite{Landsbergpfl}. We have purposely designed our approach to avoid this issue by having a different definition of a model which facilitates tractable fault localisation methods. 

We discuss miscellaneous statistical approaches here. 
One type of approach uses machine learning, including support vector machines and neural network approaches, to perform fault localisation~\cite{4813783, 6058639}. 
However, these approaches do not make the case that a recourse to machine learning methods is necessary and that a principled approach is impossible. 
Other approaches include the crosstab-based method of Wong~\cite{DBLP:journals/tsmc/WongDX12}, the hypothesis testing approach of Liu~\cite{Liu:2006:SDH:1248726.1248773}, and the probabilistic program dependence graph approach of Baah~\cite{Baah2008}. Landsberg \etal provide an axiomatic setup which uses 
{\sc sbh}s in a probabilistic framework, but does not use models~\cite{Landsbergpfl}.  

We now discuss how lightweight statistical methods have been used in the following applications. Firstly, in semi-automated fault localisation in which users inspect
code in descending order of suspiciousness~\cite{PO11}. Secondly, in fully-automated fault
localisation subroutines within algorithms which inductively synthesize (such as
{\sc cegis}~\cite{jhasynt14}) or repair programs (such as {\sc Gen}Prog~\cite{journals/tse/GouesNFW12}).
Thirdly, as a technique combined with other methods~\cite{Kim:2015:NHA:2701126.2701207, xuan:hal-01018935, baudry06a, Ju2014}. Finally, as a potential substitute for heavyweight methods which cannot scale to large programs. Thus, there is a large field of application for the techniques discussed in this paper.  

Finally, for the approach of classical probability, the overview of James~\cite{jaynes03}, the introduction of Tijms~\cite{henk2004understanding}, and the measure-theoretic foundations of Kolmogorov~\cite{kolmogorov1960foundations} all provided bases upon which to develop our approach. 

\vspace{-1mm}
\section{Conclusion}~\label{section_conclusion}
\vspace{-1mm}

In this paper, we have demonstrated there is a principled formal foundation (Doric) available for statistical fault localisation that does not require recourse to spectrum-based heuristics. In general, Doric opens up a world of different meaningful probabilities which can be reported to the engineer to aid in understanding a faulty program. To illustrate the utility of Doric, we developed two lightweight measures of fault and causal likelihood and integrated the latter into our fault localisation method \cln.  In large-scale experimentation, \cln  was demonstrated to be more accurate when compared with all known 127 {\sc sbh}s . In particular, on the Steimann benchmarks \cln was almost twice as accurate as the best performing {\sc sbh} --- you'd expect to find a fault by examining 5.02 methods as opposed to 9.02. \cln also demonstrated to have the highest 6-score on Defects4j. We think the combined effort demonstrates that our measure of causal likelihood is lightweight, effective and maintains a meaningful connection to fault localisation. 

We now discuss directions for future work. 
First, a major step in our work is to experiment with different weight functions ($w$). There are many ways to do this, so it is our hope there will be as much experimentation over different weights as there has been comparing {\sc sbh}s. A natural place to start is to define the relative likelihood of a model as a function of the number of faults in that model in conjunction with some given cumulative distribution function. Following work on fault distributions in software~\cite{grbac2015probability}, we wish to weigh models with a small number of faults to have a higher relative likelihood. The formal development in this paper lays much of the foundations critical for this step.


\newpage

\bibliographystyle{ACM}
\bibliography{mybib}

\newpage

\appendix

\section{Proofs}

In this appendix, we present the proofs supporting the main text. To simplify, we have put a proof later in our order of presentation if a part of that proof relies on a part of an earlier proof. 

To aid in the proofs we introduce some 
notation. For each $m^j \in M$, $m^j_{i,k}$ is the value at the $i$th column and $k$th row. $m^j_k$ is the matrix consisting of the $k$th row of $m^j$. For each $t_k \in T$ we let $M_k$ = \{$m^j_k| 1 \leq j \leq |M|$\}. Intuitively, this is all different $k$th rows of all models. 
We make use of a concatonation operation $\cdot$ such that $x$ $\cdot$ $y$ is the concatonation of two matrices, and extend the definition such that  $M_i \cdot M_j$ = \{$x \cdot y | x \in M_i  \wedge y \in M_j$\}. 
Accordingly, definition 3 is designed to conform to the following assumption $M = M_1 \cdot M_2 \cdot ... \cdot M_{|T|}$. It is observed that $|M_k|$ = $2^{\rho_k}-1$ if $t_k$ is failing, and 1 otherwise.

\begin{proposition}~\label{a_causal}
Equation~\ref{proof_eq2} follows using the defs. 
\end{proposition}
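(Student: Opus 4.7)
The plan is to exploit the semantic fact that $H_i$ entails $u_i$, because the model definition and valuation together enforce the factive assumption about causation: if $u_i$ is a cause of the error in test case $t_k$ (i.e., $m^j_{i,k}=\bullet$), then $u_i$ was also executed by $t_k$ (since $\bullet \in \{1,\bullet\}$). Once this entailment is established, $v_k(H_i \wedge u_i) = v_k(H_i)$ follows immediately, and the probabilities coincide.

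Concretely, I would proceed in four short steps. First, unfold $H_i$ as $h_i \wedge \bigwedge_{h_j \in H - \{h_i\}} \neg h_j$ and apply clauses (3) and (4) of Definition~\ref{def_semantics} to obtain
\begin{equation*}
v_k(H_i) \;=\; v_k(h_i) \,\cap\, \bigcap_{h_j \in H - \{h_i\}} \bigl(M - v_k(h_j)\bigr) \;\subseteq\; v_k(h_i).
\end{equation*}
Second, from clauses (1) and (2) of Definition~\ref{def_semantics}, the set $v_k(h_i) = \{m^j \in M \mid m^j_{i,k}=\bullet\}$ is contained in $v_k(u_i) = \{m^j \in M \mid m^j_{i,k} \in \{1,\bullet\}\}$. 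Combining with the first step yields $v_k(H_i) \subseteq v_k(u_i)$.

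Third, applying clause (3) once more, $v_k(H_i \wedge u_i) = v_k(H_i) \cap v_k(u_i) = v_k(H_i)$ by the inclusion just established. Fourth, using Definition~\ref{def_probability}, this equality of valuations lifts to the probabilities: $P_k(H_i \wedge u_i) = w(v_k(H_i \wedge u_i))/w(M) = w(v_k(H_i))/w(M) = P_k(H_i)$ for every $t_k$, and averaging over $k$ gives $P(H_i \wedge u_i) = P(H_i)$.

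There is no real obstacle here; the statement is essentially a semantic tautology once the factive character of $\bullet$ is made explicit. The only subtlety worth flagging in the write-up is that the entailment $H_i \Rightarrow u_i$ rides on the encoding choice in Definition~\ref{def_causal_models} (namely that $\bullet$ is reserved for cells where $c_{i,k}e_k = 1$), rather than on any separately stated logical axiom, so the proof must cite both Definition~\ref{def_causal_models} and Definition~\ref{def_semantics} to justify $v_k(h_i)\subseteq v_k(u_i)$.
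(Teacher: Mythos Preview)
Your proposal is correct and follows essentially the same route as the paper: reduce $P(H_i\wedge u_i)=P(H_i)$ to the semantic inclusion $v_k(H_i)\subseteq v_k(u_i)$, obtained via $v_k(H_i)\subseteq v_k(h_i)\subseteq v_k(u_i)$ from clauses (1)--(3) of Definition~\ref{def_semantics}. If anything, your write-up is tidier than the paper's, which contains several direction-of-inclusion typos in exactly this argument; your explicit factoring through $v_k(h_i)$ and your final step lifting the valuation equality to $P_k$ and then to $P$ via Definition~\ref{def_probability} make the logic transparent.
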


\begin{proof}
We must show $P(H_i \wedge u_i)  = P(H_i)$. It is sufficient to show $v_k(H_i \wedge u_i)$  = $v_k(H_i)$. Thus, it is sufficient to show $v_k(H_i \wedge u_i)$ = $v_k(H_i) \cap v_k(u_i)$ (by def.~\ref{def_semantics}). Now, $v_k(H_i) \cap v_k(u_i) \subseteq v_k(H_i)$ (as in general $X \cap Y \subseteq X$), thus it remains to show $v_k(H_i) \cap v_k(u_i)$  $\subseteq$ $v_k(H_i)$.  It is sufficient to show  $v_k(u_i)$  $\subseteq$ $v_k(H_i)$ (as in general $X \subseteq X$). Now, 
$v_k(H_i)$ 
= (by the abbreviation for $H_i$)
$v_k(h_1 \wedge \neg h_2 \wedge ... \wedge \neg h_{|U|-1})$ 
= (by def.~\ref{def_semantics})
$v_k(h_1) \cap v_k(\neg h_2) \cap ... \cap \neg v_k(h_{|U|-1})$ 
=
$\{m^j|m^j_{1,k} = 2\} \cap \{m^j|m^j_{1,k} \in \{1,0\}\} \cap \dots  $ (by def. 3 and 2). 
=
$\{m^j|m^j_{1,k} = 2\} \cap \{m^j|m^j_{1,k} \in \{1,0\}\} \cap \dots  $
=
$\{m^j|m^j_{1,k} = 2 \wedge m^j_{1,k} \in \{1,0\} \wedge \dots \} $
=
$\{m^j|m^j_{1,k} \in \{1,2\} \wedge m^j_{1,k} = 2 \wedge m^j_{1,k} \in \{1,0\} \wedge \dots \} $.
A subset of which is
$\{m^j|m^j_{1,k} \in \{1,2\}\}$, which is $v_k(u_i)$ (by def. 3). 

\begin{proposition}~\label{a_causal}
Equation~\ref{proof_eq4} follows using the defs. 
\end{proposition}

We sketch the proof for equation~\ref{proof_eq4}.
$P_t(H_i)$ is equal to $\frac{|v_t(H_i)|}{|M|}$ (by def.~\ref{def_probability}). 
We first do the top condition of equation~\ref{proof_eq4}. Without loss of generality, let $k = 1$.
Assume $c_{i,1} = e_1$ = 1. 
Now, $M = M_1 \cdot M_2 \cdot ... \cdot M_{|M|}$ (by A1). 
 Thus,
$v_1(H_i)$ = $\{m^j_1 |m^j \in v_1(H_i)\} \cdot M_2 \cdot ... \cdot M_n$ (as in general, $v_1(\phi)$ = $\{m^j_k \in M_1|m^j \in v_k(\phi)\} \cdot M_2 \cdot ... \cdot Mn$). So,
$|v_1(H_i)|$ = $|\{m^j_1|m^j \in v_1(H_i)\}| \times |M_2| \times ... \times |M_{|T|}|$. So,
$|v_1(H_i)|$ = $1  \times |M_2| \times ... \times |M_{|T|}|$ (given if $m^j,m^k \in $ $v_1(H_i)$ then $m^j_1 = m^k_1$).
Thus,
$1 \times |M_2| \times ... \times |M_{|T|}|$ / $|M_1| \times |M_2| \times ... \times |M_{|T|}|$.
This is equal to $1/|M_1|$ (by cancellation), which is equal to $1/(2^{\rho_1}-1)$ (given $\rho_1 = \sum_{i < |U|} c_{i,1}$). 
We now do the bottom condition (when either $c_{i,1}$ or $e_1$ is 0). Accordingly, there are no models in $v_1(h_i)$ (by def.~\ref{Causal Models}), thus no models in $v_1(H_i)$ (by def.~\ref{def_semantics} and $H_i$), and so $P_1(H_i) = 0$. 

\begin{proposition}~\label{a_causal}
Equation~\ref{proof_eq5} follows using the defs. 
\end{proposition}

We must show $P(u_i)$ = $\sum_k c_{i,k}/|T|$. $P(u_i)$ is equal to $f(u_i)/|T|$ (by proposition~\ref{prop_basic_lang}).  The latter is equal to $\sum_k f_k(u_i)/|T|$. It remains to show $f_k(u_i) = c_{i,k}$, for both cases when $c_{i,k}$ is 1 or 0 (given $c$ is a Boolean matrix). Assume $c_{i,k}$ is 1. Then for all $m \in M$, $m_{i,k} \in {1, \bullet}$ (by def.~\ref{def_causal_models}). Thus, $v_k(u_i) = M$ (by def.~\ref{def_semantics}). So, $f_k(u_i)$ = 1. Assume $c_{i,k}$ is 0. Then for all $m \in M$, $m_{i,k} = 0$ (by def.~\ref{def_causal_models}). Thus, $v_k(u_i) = M$ (by def.~\ref{def_semantics}). 

\end{proof}

\begin{proposition}~\label{a_L*}
For all $\phi \in L^*$, $P(\phi)$ = $\frac{f(\phi)}{|T|}$
\end{proposition}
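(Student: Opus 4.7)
The plan is to prove Proposition~\ref{a_L*} by showing that for every $\phi \in L^*$ and every test case $t_k$, the valuation $v_k(\phi)$ is either $M$ or $\emptyset$, and then reading the result off from the definitions of $P_k$ and $f_k$. The dichotomy $v_k(\phi) \in \{M, \emptyset\}$ is the crux, because $L^*$ excludes the causal hypotheses $h_i$ (whose valuations are the genuinely partial subsets $\{m^j \mid m^j_{i,k} = \bullet\}$); only once $h_i$'s are barred can we hope for such a clean dichotomy.

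First, I would carry out a structural induction on $\phi \in L^*$. For the base case $\phi = u_i$, I would appeal directly to Definition~\ref{def_causal_models}: if $c_{i,k} = 1$ (and $i \neq |U|$), then either $e_k = 1$ and $m^j_{i,k} \in \{1, \bullet\}$ for every $j$, or $e_k = 0$ and $m^j_{i,k} = c_{i,k} = 1$ for every $j$; in both subcases $v_k(u_i) = \{m^j \mid m^j_{i,k} \in \{1, \bullet\}\} = M$. If instead $c_{i,k} = 0$, the second clause of Definition~\ref{def_causal_models} forces $m^j_{i,k} = 0$ for every model, so $v_k(u_i) = \emptyset$. (The case $i = |U|$, i.e.\ $\phi = e$, is handled identically by the second clause.) For the inductive step, the identities $v_k(\phi \wedge \psi) = v_k(\phi) \cap v_k(\psi)$ and $v_k(\neg \phi) = M - v_k(\phi)$ from Definition~\ref{def_semantics} preserve the dichotomy: intersections and complements within $\{M, \emptyset\}$ stay in $\{M, \emptyset\}$.

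Second, I would combine this with the immutable axioms on $w$. Since $w(M) > 0$ and $w(\emptyset) = 0$, we get $P_k(\phi) = w(v_k(\phi))/w(M) = 1$ exactly when $v_k(\phi) = M$ and $P_k(\phi) = 0$ exactly when $v_k(\phi) = \emptyset$. By the induction above, these are the only two possibilities for $\phi \in L^*$, so $P_k(\phi) = f_k(\phi)$ for every $k$. Summing and dividing by $|T|$ then yields
\[
P(\phi) \;=\; \sum_{k} \frac{P_k(\phi)}{|T|} \;=\; \sum_{k} \frac{f_k(\phi)}{|T|} \;=\; \frac{f(\phi)}{|T|},
\]
as required.

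The only part that requires any real care is the base case, specifically noticing that the definition of models distinguishes failing rows (where $c_{i,k} e_k = 1$ permits the $\bullet$ marking) from all other rows (where the entry is forced to equal $c_{i,k}$), and verifying that in \emph{either} subcase of $c_{i,k} = 1$ the valuation $v_k(u_i)$ is the whole of $M$. Once that observation is in hand, the inductive step is purely set-theoretic and the conclusion is immediate. Note also that this proof does not rely on the classical-interpretation assumption of indifference; it uses only the measure-theoretic properties of $w$, so Proposition~\ref{a_L*} holds for every admissible weight function.
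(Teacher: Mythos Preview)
Your proof is correct and follows essentially the same approach as the paper: reduce $P(\phi) = f(\phi)/|T|$ to showing $P_k(\phi) = f_k(\phi)$ for each $k$, which in turn rests on the dichotomy $v_k(\phi) \in \{M,\emptyset\}$ for $\phi \in L^*$ together with the axioms $w(\emptyset)=0$, $w(M)>0$. The paper's proof simply asserts this dichotomy parenthetically (``$v_k(\phi) = \emptyset$ (as $\phi \in L^*$)''), whereas you actually supply the structural induction establishing it, so your argument is strictly more complete.
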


\begin{proof}
We must show for all $\phi \in L^*$, $P(\phi)$ = $\frac{f(\phi)}{|T|}$.
$P(\phi)$ is equal to $\sum_k P_k(\phi)/|T|$ (by def.\ref{def_probability}). The latter is equal to $\sum_k (w(v_k(\phi))$ $/w(M))/|T|$ (by def.\ref{def_probability}).
Now, $\frac{f(\phi)}{|T|}$ = $\sum_k \frac{f_k(\phi)}{|T|}$ (by definition of $f$). Thus it is remains to prove $w(v_k(\phi))/w(M)) = f_k(\phi)$.
We have two cases to consider, when $f_k(\phi) = 1$  and when it is 0. We do the former first. Assume $f_k(\phi) = 1$, then 
$v_k(u_i) = M$ (by definition of $f$). So $w(v_k(\phi))/w(M))$ = $w(M)/w(M)$ = 1. Assume $f_k(\phi) = 0$, then $v_k(\phi) \neq M$. If the latter, then $v_k(\phi) = \emptyset$ (as $\phi \in L^*$). So $w(v_k(\phi))/w(M))$ = $w(\emptyset)/w(M)$ = 0 (by def. of $w$). 
\end{proof}

\begin{proposition}~\label{a_classical}
Equation~\ref{eq_classical_prob} follows given indifference
\end{proposition}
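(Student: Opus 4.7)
The plan is to apply the definition of $P_k$ from Def.~\ref{def_probability}, reduce the weights of arbitrary subsets of $M$ to sums of weights of singletons via finite additivity, and then invoke indifference to cancel the common singleton weight from numerator and denominator.

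First, by Def.~\ref{def_probability}, $P_k(\phi) = w(v_k(\phi))/w(M)$, so it suffices to show that $w(v_k(\phi))/w(M) = |v_k(\phi)|/|M|$. Second, I would establish the helper fact that for any $X \subseteq M$, $w(X) = \sum_{m \in X} w(\{m\})$. This follows by a straightforward induction on $|X|$ using the immutable assumptions on $w$: the base case $X = \emptyset$ uses $w(\emptyset) = 0$; the inductive step peels off a single element $m$ and applies the additivity assumption $w(X) = w(X \setminus \{m\}) + w(\{m\})$, which is valid because $\{m\}$ and $X \setminus \{m\}$ are disjoint. Note that $M$ is finite, since $|M| \leq \prod_k 2^{\rho_k}$, so the induction terminates.

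Third, by the indifference assumption, there exists a constant $c \in \mathbb{R}$ such that $w(\{x\}) = c$ for every $x \in M$. Applying the helper fact gives $w(v_k(\phi)) = c \cdot |v_k(\phi)|$ and $w(M) = c \cdot |M|$; furthermore $c > 0$ because $w(M) > 0$ and $M$ is finite and non-empty. The constant $c$ therefore cancels cleanly in the ratio, yielding
\[
P_k(\phi) \;=\; \frac{w(v_k(\phi))}{w(M)} \;=\; \frac{c \cdot |v_k(\phi)|}{c \cdot |M|} \;=\; \frac{|v_k(\phi)|}{|M|},
\]
which is Eq.~\ref{eq_classical_prob}.

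The only step doing any real work is the finite-additivity induction, and even that is routine bookkeeping from the immutable assumptions on $w$ already listed just before Prop.~\ref{prop_basic_lang}. I do not anticipate any substantive obstacle, since the result is essentially the standard observation that a finitely additive measure which is constant on singletons of a finite set must be a rescaling of counting measure.
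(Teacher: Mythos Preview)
Your proposal is correct and follows essentially the same approach as the paper: both start from $P_k(\phi)=w(v_k(\phi))/w(M)$, decompose the weights into sums of singleton weights via finite additivity, invoke indifference to obtain a common constant, and cancel. Your version is slightly more explicit in justifying the decomposition step by an induction on $|X|$, whereas the paper simply writes the sums directly; but the argument is the same.
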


\begin{proof}
Let $v_k(\phi) = \{m^i, \dots,m^j\}$ and $M = \{m^1, \dots ,M^{|M|}\}$. 
Then $P_k(\phi)$ = $\sum_{i=1}^{j} w(\{m^i\})$ / $\sum_{k=1}^{|M|} w(\{m^k\})$ (by Def.~\ref{def_probability}). Now, $w(\{m^k\}) > 0$ for all $m^k \in M$ (by the conditions on $w$ and indifference). Given indifference, let $w(\{m^k\})$ = $x \times |\{m^k\}|$. Then $P_k(\phi)$ = $\sum_{i=1}^{j} x|\{m^i\}|$ / $\sum_{k=1}^{|M|} x|\{m^k\}|$ (by substitution). Thus, $P_k(\phi)$ = $x\sum_{i=1}^{j} |\{m^i\}|$ / $x\sum_{k=1}^{|M|} |\{m^k\}|$ (by distribution). So, $P_k(\phi)$ = $\sum_{i=1}^{j} |\{m^i\}|$ / $\sum_{k=1}^{|M|} |\{m^k\}|$ (by cancellation). Equivalently, $P_k(\phi)$ = $|\{m^i,\dots,m^j\}|$ / $|M|$. So,  $P_k(\phi)$ = $|v_k(\phi)|$ / $|M|$ (by substitution). 
\end{proof}

\vspace{0.5cm}

\begin{proposition}~\label{a_updating}
Let $c$ be a coverage matrix where $c_{i,k} = c_{j,k} = 1$ for some $t_k \in T$. Then $P(H_i|u_i) < P(H_i|u_i \wedge \neg h_j)$.
\end{proposition}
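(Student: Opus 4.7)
The plan is to reduce the inequality to a single positivity statement, $P(u_i \wedge h_j) > 0$, by cancelling a common numerator on both sides. First, I would apply the definition of conditional probability together with Eq.~\ref{proof_eq2} to get
\begin{equation*}
P(H_i \mid u_i) \;=\; \frac{P(H_i \wedge u_i)}{P(u_i)} \;=\; \frac{P(H_i)}{P(u_i)}.
\end{equation*}
For the right-hand conditional, the crucial observation is that the total causal hypothesis $H_i$ already contains $\neg h_j$ as a conjunct whenever $j \neq i$, so by Def.~\ref{def_semantics} we get $v_k(H_i) \subseteq v_k(\neg h_j)$ and hence $v_k(H_i \wedge u_i \wedge \neg h_j) = v_k(H_i \wedge u_i) = v_k(H_i)$ for every $t_k$. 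Summing over $k$ via Def.~\ref{def_probability} yields $P(H_i \wedge u_i \wedge \neg h_j) = P(H_i)$, and therefore
\begin{equation*}
P(H_i \mid u_i \wedge \neg h_j) \;=\; \frac{P(H_i)}{P(u_i \wedge \neg h_j)}.
\end{equation*}

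Both sides share the numerator $P(H_i)$, which is strictly positive whenever some failing test executes $u_i$: under the hypothesis $c_{i,k} = 1$ together with $e_k = 1$, Def.~\ref{def_causal_models} permits a model in which $u_i$ is the unique cause of the error in row $k$, so $v_k(H_i) \neq \emptyset$. The inequality therefore reduces to $P(u_i \wedge \neg h_j) < P(u_i)$, and using the partition $P(u_i) = P(u_i \wedge h_j) + P(u_i \wedge \neg h_j)$ this is equivalent to
\begin{equation*}
P(u_i \wedge h_j) \;>\; 0.
\end{equation*}

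The main step is then to exhibit a witness model for this positivity. Using $c_{i,k} = c_{j,k} = 1$ together with $e_k = 1$, Def.~\ref{def_causal_models} allows a model $m$ with $m_{j,k} = \bullet$ (so $h_j$ holds at $t_k$) and $m_{i,k} \in \{1,\bullet\}$ (so $u_i$ holds at $t_k$); the side condition that some cell in row $k$ equals $\bullet$ is satisfied by $m_{j,k}$ itself. By Def.~\ref{def_semantics} this $m$ lies in $v_k(u_i) \cap v_k(h_j) = v_k(u_i \wedge h_j)$, giving $P_k(u_i \wedge h_j) > 0$ and hence, after averaging over tests, $P(u_i \wedge h_j) > 0$. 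I expect the main obstacle to be less the calculation than flagging the implicit hypotheses that make the statement meaningful: one needs $i \neq j$ (otherwise $H_i \wedge \neg h_j \equiv \bot$ and the right-hand side collapses to $0$) and one needs the distinguished $t_k$ to be failing (otherwise $v_k(h_j) = \emptyset$ and the witness construction breaks), and the proof should be phrased so that these conditions are made explicit rather than tacit.
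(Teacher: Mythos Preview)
Your proposal is correct and follows essentially the same route as the paper: both arguments use $H_i \Rightarrow \neg h_j$ to equate the numerators, reduce to the strict inequality $P(u_i \wedge \neg h_j) < P(u_i)$ between denominators, and discharge it by exhibiting a model in $v_k(h_j)$. Your version is in fact more careful than the paper's, which silently assumes $i \neq j$, tacitly uses $e_k = 1$ when claiming $v_k(h_j) \neq \emptyset$, and never checks that the common numerator $P(H_i)$ is strictly positive before cancelling it.
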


\begin{proof}
We show $P(H_i|u_i) < P(H_i|u_i \wedge \neg h_j)$ given the conditions of the proposition. 
This is equivalent to $P(H_i \wedge u_i)/P(u_i) < P(H_i \wedge u_i \wedge \neg h_j)/$ $P(u_i \wedge \neg h_j)$ (by $|$-def.). 
Equivalently, $P(H_i \wedge u_i)/P(u_i) < P(H_i \wedge u_i)/$ $P(u_i \wedge \neg h_j)$  (given $H_i$ implies $\neg h_j$). 
Accordingly, it is sufficient to show $1/P(u_i) < 1/P(u_i \wedge \neg h_j)$. 
It then suffices to show $P(u_i) > P(u_i \wedge \neg h_j)$. 
Equivalently, $\sum_k P_k(u_i)/|T| > \sum_k P_k(u_i \wedge \neg h_j)/|T|$ (by def.~\ref{def_probability}).
Equivalently, $\sum_k P_k(u_i) > \sum_k P_k(u_i \wedge \neg h_j)$ (by cancellation). 
Equivalently, $\sum_k |v_k(u_i)|/|M| > \sum_k |v_k(u_i \wedge \neg h_j)|/|M|$.
Equivalently, $\sum_k |v_k(u_i)| > \sum_k |v_k(u_i \wedge \neg h_j)|$ (by cancellation). 
Equivalently, $\sum_k |v_k(u_i)| > \sum_k |v_k(u_i) \cap (M - v(h_j)))|$ (by def.~\ref{def_language}). 
It is sufficient to show $M - v(h_j) \subset$ $v_k(u_i)$. $v_k(u_i) = M$ (given $c_{i,k} = 1$). 
Thus it suffices to show 
$M - v(h_j) \subset$ $M$. 
To prove this, it is sufficient to show $v(h_j) \neq \emptyset$. This holds given $c_{j,k} = 1$ (given the conditions of the proposition and def.~\ref{def_causal_models}).
\end{proof}

\begin{proposition}~\label{a_fault_li}
Equations~\ref{f1},~\ref{f2}, \& ~\ref{f3} follow given indifference.
\end{proposition}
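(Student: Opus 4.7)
The plan is to prove the three equations in turn, treating (f1) as an unfolding of the definition, (f3) by direct counting of row configurations under indifference, and (f2) by inclusion--exclusion plus a row-independence lemma. Throughout I will rely on Equation~\ref{eq_classical_prob} ($P_k(\phi) = |v_k(\phi)|/|M|$) and on the row-factorisation $M = M_1 \cdot M_2 \cdots M_{|T|}$ introduced at the start of the appendix.

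For equation~\ref{f1}, I would simply unfold the abbreviation $f_i \equiv \bigvee_{t_k \in T} \Diamond_k h_i$ introduced in Section~\ref{section_syntax} and apply $P$ to both sides. Indifference plays no role here; the identity is purely definitional.

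For equation~\ref{f3}, I apply Equation~\ref{eq_classical_prob} to obtain $P_t(h_i) = |v_t(h_i)|/|M|$. Since $v_t(h_i) = \{m^j \in M : m^j_{i,t} = \bullet\}$ constrains only the $t$-th row, the factorisation of $M$ lets the cardinality ratio telescope to $|\{r \in M_t : r_i = \bullet\}|/|M_t|$. I then split on cases. If $c_{i,t}e_t = 1$, then Def.~\ref{def_causal_models} tells us the $t$-th row has $\rho_t$ covered cells each independently in $\{1,\bullet\}$, subject to at least one cell being $\bullet$; thus $|M_t| = 2^{\rho_t}-1$, and fixing the $i$-th cell to $\bullet$ automatically satisfies that side-condition while leaving the remaining $\rho_t - 1$ covered cells free, giving $2^{\rho_t - 1}$ valid rows. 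Otherwise, either $c_{i,t}=0$ or $e_t=0$ forces $m^j_{i,t} \in \{c_{i,t}\} \subseteq \{0,1\}$ by Def.~\ref{def_causal_models}, so $v_t(h_i) = \emptyset$ and $P_t(h_i) = 0$.

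For equation~\ref{f2}, I apply the standard identity $P(A \vee B) = P(A) + P(B) - P(A \wedge B)$ with $A = \Diamond_m h_i$ and $B = \bigvee_{j=m+1}^{|T|} \Diamond_j h_i$. Two auxiliary observations make this go through. First, by Def.~\ref{def_semantics}(5) we have $P_k(\Diamond_m \phi) = P_m(\phi)$ for every $k$, so averaging over $k$ collapses to $P(\Diamond_m h_i) = P_m(h_i)$. Second, I will prove a row-independence lemma: because $v_k(\Diamond_m h_i)$ constrains only the $m$-th row while $v_k(\bigvee_{j=m+1}^{|T|} \Diamond_j h_i)$ constrains only rows $m+1,\ldots,|T|$, the row-factorisation of $M$ yields $P(A \wedge B) = P(A)\cdot P(B)$ under indifference. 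Substituting these identities into inclusion--exclusion and rearranging produces the recursive formula.

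The main obstacle will be the row-independence lemma in step three: while indifference reduces it to a counting exercise on disjoint subsets of row-indices, one has to track carefully how each $v_k(\cdot)$ decomposes across $M_1, \ldots, M_{|T|}$ and verify that the intersection factors as a product before ratios are taken. Modulo this bookkeeping and the minor notational ambiguity in the statement of~\ref{f2}, the remaining steps are routine substitutions into the definitions from Sections~\ref{section_probability} and~\ref{section_classical_interpretation}.
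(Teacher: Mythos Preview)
Your proposal is correct and follows essentially the same route as the paper: Equation~\ref{f1} is handled definitionally, Equation~\ref{f3} by row-counting under the factorisation $M = M_1 \cdots M_{|T|}$ (the paper defers this to the analogous argument for Equation~\ref{proof_eq4}), and Equation~\ref{f2} via inclusion--exclusion together with a row-independence step. The paper only sketches the two-test-case instance of~\ref{f2} whereas you outline the general recursion, but the underlying idea---factoring $P(A \wedge B)$ as $P(A)P(B)$ because $A$ and $B$ constrain disjoint rows---is identical.
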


\begin{proof}
We sketch the proof here. Equation~\ref{f1} follows given the definition of $f_i$ in Section~\ref{section_classical_foundations}. Equation~\ref{f3} is similar to the proof for Equation~\ref{proof_eq4}. We do the proof for Equation~\ref{f2} here. 
As the general proof for this involves syntactically complex formulae, we sketch the proof for reasons of space, and do the case for when there are two test cases $T = \{t_1, t_2\}$ here. 
$P(\Diamond_1 h_i \vee \Diamond_2 h_i)$ = 
$(1 - P(\Diamond_1 h_i))$ $( P(\Diamond_1 h_i)) + P(\Diamond_2 h_i))$. The latter expression is equal to 
$( P(\Diamond_1 h_i)) + P(\Diamond_2 h_i))$ - $(P(\Diamond_1 h_i) P(\Diamond_1 h_i))$. Thus it is sufficient to prove $P(\Diamond_1 h_i \vee \Diamond_2 h_i)$ is equal to that. Now, in general $P(\phi \vee \psi) = P(\phi) + P(\psi) - P(\phi \wedge \psi)$. Thus it suffices to show $P(\Diamond_1 h_i) P(\Diamond_1 h_i)$ = $P(\Diamond_1 h_i \wedge \Diamond_1 h_i)$. It is sufficient to show $|v_1(h_i) \cap v_2(h_i)|/|M|$ = $|v_1(h_i)|/|M| \times |v_2(h_i)|/|M|$. It is sufficient to show $|v_1(h_i) \cap v_2(h_i)|$ = $|v_1(h_i)||v_2(h_i)|$. Now $M = M_1 \cdot M_2$. Thus, for every $m^j_1 \in M_1$ there is a paired $m^j_2 \in M_2$. Thus $|v_1(h_i) \cap v_2(h_i)|$ = $|v_1(h_i)||v_2(h_i)|$.
\end{proof}

\end{document}